\newtheorem{theorem}{Theorem}[section]
\newtheorem{proposition}[theorem]{Proposition}
\newtheorem{lemma}[theorem]{Lemma}
\newtheorem{corollary}[theorem]{Corollary}
\newtheorem*{theorem*}{theorem}
\theoremstyle{definition}
\newtheorem{definition}[theorem]{Definition}
\newtheorem*{definition*}{Definition}
\newtheorem{observation}[theorem]{Observation}
\newtheorem*{question*}{Question}
\newtheorem{example}[theorem]{Example} 
\DeclareRobustCommand{\stirling}{\genfrac\{\}{0pt}{}}
\newcommand{\ra}{\rightarrow}
\newcommand{\HP}{{H \negthinspace P}}
\newcommand{\MHP}{{M \negthinspace H \negthinspace P}}
\newcommand{\MRP}{{M \negthinspace R \negthinspace P}}
\newcommand{\RP}{{R \negthinspace P}}
\newcommand{\BRP}{{B \negthinspace R \negthinspace P}}
\newcommand{\HH}{\mathcal H}
\title[a metric on phylogenetic trees]{A partial order and cluster-similarity metric on rooted phylogenetic trees}
\author{Michael Hendriksen}
\author{Andrew Francis}
\address{Centre for Research in Mathematics and Data Science, Western Sydney University, NSW, Australia}
\email{m.hendriksen@westernsydney.edu.au, a.francis@westernsydney.edu.au} 
\begin{document}

\begin{abstract}
Metrics on rooted phylogenetic trees are integral to a number of areas of phylogenetic analysis. Cluster-similarity metrics have recently been introduced in order to limit skew in the distribution of distances, and to ensure that trees in the neighbourhood of each other have similar hierarchies. In the present paper we introduce a new cluster-similarity metric on rooted phylogenetic tree space that has an associated local operation, allowing for easy calculation of neighbourhoods, a trait that is desirable for MCMC calculations. The metric is defined by the distance on the Hasse diagram induced by a partial order on the set of rooted phylogenetic trees, itself based on the notion of a hierarchy-preserving map between trees.  The partial order we introduce is a refinement of the well-known refinement order on hierarchies.  Both the partial order and the hierarchy-preserving maps may also be of independent interest. 
\end{abstract}

\date{\today}

\maketitle

\section{Introduction}
Phylogenetic trees arise frequently in attempts to describe relations among species, and it is often necessary to be able to compare trees that represent different possible relations among the same set of taxa. For instance, assigning a distance between phylogenetic trees can be important for assessing the consistency among the tree topologies inferred from different sampling of alleles (see~\cite{Zhang2019} for a recent example relating to the bamboo genus \textit{Phyllostachys}). 

Metrics are also used in a number of other areas in phylogenetics to measure dissimilarity between phylogenetic trees, such as the exploration of tree space, computation of consensus methods, and assessments of phylogenetic reconstruction. Although the earliest metric on rooted phylogenetic trees applicable to both binary and non-binary trees was defined in 1981 --- the Robinson-Foulds metric~\citep{Robinson1981} --- since the 1990's there has been a relative explosion of metrics on rooted trees, including split nodal distance~\citep{Cardona2009}, transposition distance~\citep{Alberich2009}, matching cluster distance~\citep{Bogdanowicz2013}, and a parsimony-based metric~\citep{Moulton2015}, as well as the rNNI and rSPR distances, first studied on rooted trees by \citet{Moore1973} and \citet{Hein1990} respectively (with the former only considering binary trees).

A major downside of several easily computable metrics, including the Robinson-Foulds distance, is that the majority of distances between a random pair of binary trees are comparatively very large. That is, most binary trees are as far away from each other as possible, leading to a right skew in the distribution of distances between pairs of trees in tree space~\citep{Steel1988}. This is undesirable, as it translates to a limited ability to meaningfully distinguish between binary trees.

Despite this, the Robinson-Foulds metric is well-represented in studies where a metric is required to distinguish between two or more groups of trees (recent examples include~\citet{Cole2019,Sevillya2019,Zhang2019}). This is likely due to both its ease of calculation, as well as the fact that it outperforms many other metrics on real data with respect to several measures based on practical considerations \citep{Kuhner2014}.
Additionally, metrics based on local operations such as Subtree Prune and Regraft (SPR) and Nearest Neighbour Interchange (NNI) --- often used due to the ease of calculating the neighbourhood of a given tree --- have the potentially undesirable property that trees in the neighbourhood of one another can have very different hierarchies.

In response, some new metrics based on cluster similarity have been introduced~\citep{Bogdanowicz2013,Shuguang2015} that have been shown to have fewer of the aforementioned downsides of other metrics.  In the present paper, we introduce an alternative metric based on cluster similarity, with several potential benefits. The metric is based on a graded partial order, which means the associated theory can be brought to bear and the rank can be used to estimate tree distances. It also relies on a natural local operation to move around in tree space, allowing for easy computation of the neighbourhood of a given tree --- a particularly useful property in MCMC exploration of tree space. Finally, the trees have correspondingly much larger neighbourhoods than other local operation metrics, also useful for MCMC exploration~\citep{Guo2008}. Given the widespread use of MCMC to infer phylogenies, for instance by \cite{Okumura2012}, these aspects are especially important to consider.

While calculating distances within the metric is non-trivial (the authors have not yet found a sub-exponential algorithm to do so), we provide an upper bound approximation that matches the true distance in the majority of cases in experimental simulations.  This approximation takes polynomial time and simulations suggest that the upper bound for the metric does not have a skew (unlike the Robinson-Foulds distance), so it is hoped that this metric will also not be skewed. 

As with previous cluster-similarity metrics, trees that are a short distance apart have similar hierarchies. Indeed, for any pair of trees of distance $1$ apart, the symmetric difference of their hierarchies contains at most three clusters. The metric is based upon the concept of a hierarchy-preserving map, which, as the name suggests, relates trees that have similar hierarchies. The partial order and the hierarchy-preserving maps may also be of independent interest.

Specifically, we anticipate that this new metric will outperform Robinson-Foulds metrics in discrimination between sets of trees, especially on real data as computational experiments have shown the present metric to remain successful at discrimination in the specific case of bifurcating trees. Additionally, it should increase accuracy of phylogenetic reconstruction using Markov Chain Monte Carlo methods. Finally, as the upper bound approximation is easy to calculate and relatively accurate, it will ameliorate computation speed concerns as well. 

In Section 2 we introduce the notion of a hierarchy-preserving map between trees, and show that there is a unique maximal hierarchy-preserving map between any pair of trees for which a hierarchy-preserving map exists. We then show that hierarchy-preserving maps induce a partial order on the set of rooted phylogenetic trees, and make some initial observations about the partial order, including that it refines refinement. In Section 3 we introduce a metric based on the Hasse diagram of the partial order induced by hierarchy-preserving maps. In Section 4 we introduce an algorithm for calculating an upper bound on the metric, and present initial results on its properties. Finally, in Section 5 we present some computational findings from a program used to calculate the upper bound on the metric, with the program available at \cite{Hendriksen2019}.

\section{Hierarchy-preserving maps}
\label{s:hierarchy.preserving.maps}

Throughout this paper we refer to \textit{phylogenetic trees} on a set of taxa $X$, which are rooted trees with no vertices of degree-$2$ other than the root, and whose leaves are bijectively labelled by the set $X$. The set of all such trees on a given set $X$ is denoted $RP(X)$. If all non-leaf and non-root vertices of a tree $T$ have degree $3$, $T$ is referred to as \textit{binary}, and the set of all binary trees on $X$ is denoted $BRP(X)$;.

In this section we introduce \textit{hierarchy-preserving} maps on the set of trees $RP(X)$. These are used to define a partial order on $RP(X)$.

Recall the following standard definitions in phylogenetics (see for example the book by ~\cite{PhyloSteel}).

\begin{definition}
A \textit{hierarchy} $H$ on a set $X$ is a collection of subsets of $X$ with the following properties:

\begin{enumerate}
\item $H$ contains both $X$ and all singleton sets $\{x\}$ for $x \in X$.
\item If $H_1,H_2\in H$, then $H_1 \cap H_2 = \varnothing$, $H_1 \subseteq H_2$ or $H_2 \subseteq H_1$.
\end{enumerate} 
\end{definition}

\begin{definition}
Let $T \in RP(X)$ be a tree and $v$ be a vertex of $T$. Then the \textit{cluster} of $T$ associated with $v$ is the subset of $X$ consisting of the descendants of $v$ in $T$. If a cluster $C$ is not $X$ or a singleton, $C$ is referred to as a \textit{proper cluster}, and the set of proper clusters of $T$ is denoted $P(T)$.
\end{definition}

A collection of subsets of $X$ is a hierarchy if and only if it is the set of clusters of some rooted phylogenetic tree $T$ taken over all vertices of $T$ (see~\cite{PhyloSteel} for instance). For this reason we refer to the set of clusters of $T$ as the \textit{hierarchy} of $T$, denoted $H(T)$. 

\begin{definition}\label{d:hpm}
Let $T,T' \in RP(X)$ with hierarchies $H(T)$ and $H(T')$. Then $\delta : H(T) \ra H(T')$ is a \textit{hierarchy-preserving} map if $\delta$ is the identity on singletons and the following properties hold for all $A,B \in H(T)$:
\begin{enumerate}
\item \textbf{Enveloping:} $A \subseteq \delta(A)$, and
\item \textbf{Subset-Preserving:} $A \subset B$ implies $\delta(A) \subset \delta(B)$.
\end{enumerate}

There are several interesting properties that follow almost immediately from the definitions. It is easy to check, for instance, that the composition of two hierarchy-preserving maps is also a hierarchy-preserving map. Furthermore, a hierarchy-preserving map will always map $X$ to $X$.

If $\delta: H(T) \ra H(T')$ is a hierarchy-preserving map and there exists no hierarchy preserving map $\varphi: H(T) \ra H(T')$ with $\varphi \ne \delta$ so that $\delta(A) \subseteq \varphi(A)$ for all $A \in H(T)$, then $\delta$ is termed \textit{maximal} (with respect to $T$ and $T'$).
\end{definition}

\begin{example}
Let $T,T' \in RP(X)$ where $X=\{a,b,c,d,e,f\}$ as depicted in Figure \ref{ExampleHPM}. Then $P(T)=\{ab,cd,abcd\}$ and $P(T') = \{abcd,abcde\}$. Then there exists a hierarchy-preserving map $\varphi$ from $H(T)$ to $H(T')$ that is the identity on singletons and $X$, maps $ab$ and $cd$ to $abcd$ and maps $abcd$ to $abcde$. One can easily confirm the necessary properties hold, and that this is the unique hierarchy-preserving map from $T$ to $T'$. 
\end{example}

\begin{figure}[ht]
\centering
\includegraphics{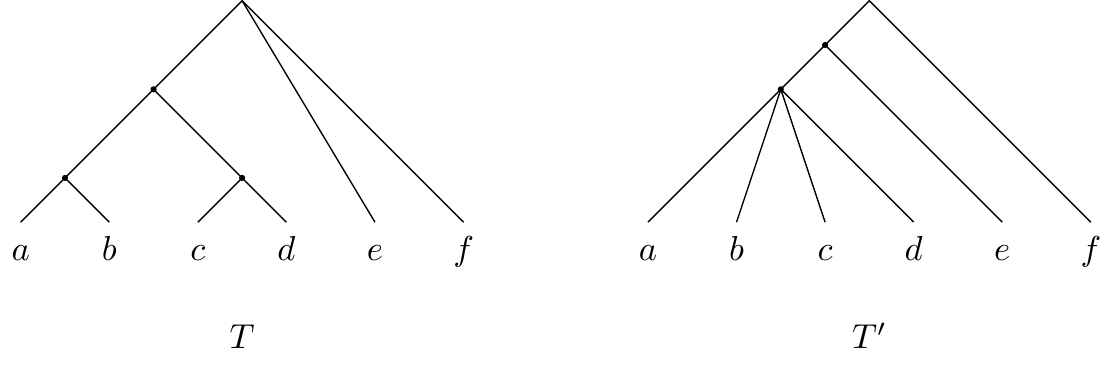}
\caption{A pair of trees $T$ and $T'$ with a hierarchy-preserving map from $H(T)$ to $H(T')$ that maps $ab$ and $cd$ to $abcd$, and maps $abcd$ to $abcde$.}
\label{ExampleHPM}
\end{figure}

\begin{theorem}
\label{t:MaximalMap}
For $T,T' \in RP(X)$, if there exists a hierarchy-preserving map from $T$ to $T'$ then there is a unique maximal hierarchy-preserving map from $T$ to $T'$.
\end{theorem}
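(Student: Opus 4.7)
The plan is to construct the maximal hierarchy-preserving map as the pointwise ``join'' of all hierarchy-preserving maps from $T$ to $T'$, exploiting the fact that the hierarchy $H(T')$ is finite and laminar (any two non-disjoint clusters are nested).

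First, I would record the basic observation that, for any two hierarchy-preserving maps $\delta_1,\delta_2 \map{H(T)}{H(T')}$ and any $A\in H(T)$, the clusters $\delta_1(A)$ and $\delta_2(A)$ are comparable under inclusion. Indeed, by the enveloping property both contain $A$, so $\delta_1(A)\cap\delta_2(A)\supseteq A\neq\varnothing$, and the hierarchy axiom for $H(T')$ then forces one to be contained in the other. Consequently the set-theoretic union $\delta_1(A)\cup\delta_2(A)$ is simply the larger of the two, and in particular lies in $H(T')$.

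Next, given two hierarchy-preserving maps $\delta_1,\delta_2$, I would define the pointwise join $\delta\map{H(T)}{H(T')}$ by $\delta(A):=\delta_1(A)\cup\delta_2(A)$. Checking that $\delta$ is hierarchy-preserving is where most of the work lies. The identity condition on singletons is immediate from $\delta_1$ and $\delta_2$ satisfying it, and the enveloping property is clear since $A\subseteq\delta_i(A)\subseteq\delta(A)$. For subset-preservation, suppose $A\subsetneq B$ in $H(T)$. Then $\delta_i(A)\subsetneq\delta_i(B)$ for $i=1,2$. A short case analysis (splitting on which of $\delta_1(A)$ vs. $\delta_2(A)$ is larger, and likewise for $B$) together with comparability of the two outputs at $B$ gives $\delta(A)\subsetneq\delta(B)$; for example, if $\delta(A)=\delta_1(A)$ and $\delta(B)=\delta_2(B)$, then $\delta(A)=\delta_1(A)\subsetneq\delta_1(B)\subseteq\delta_2(B)=\delta(B)$. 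The other three cases are entirely analogous.

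Having shown that the pointwise join of two hierarchy-preserving maps is again hierarchy-preserving, I would iterate: since $H(T)$ is finite and, for each $A\in H(T)$, the value $\delta(A)$ ranges over the finite set $H(T')$, the collection of all hierarchy-preserving maps $H(T)\to H(T')$ is finite. Joining them all together yields a hierarchy-preserving map $\delta^*$ with $\delta(A)\subseteq\delta^*(A)$ for every hierarchy-preserving $\delta$ and every $A\in H(T)$. By construction $\delta^*$ is maximal, and it is the unique maximal map, since any other maximal $\varphi$ satisfies $\varphi(A)\subseteq\delta^*(A)$ for all $A$ and hence equals $\delta^*$ by maximality of $\varphi$.

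The main obstacle is the subset-preservation verification in the join construction; the case analysis is routine but must be carried out carefully to ensure strict inclusions are preserved when the ``winner'' of the join differs between $A$ and $B$. Once the join is shown to be hierarchy-preserving, finiteness of $H(T')$ does all the remaining work.
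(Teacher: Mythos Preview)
Your proof is correct and takes a genuinely different route from the paper. The paper argues by contradiction: assuming two distinct maximal maps $\delta_1,\delta_2$, it picks an inclusion-maximal cluster $B$ on which they disagree, and replaces $\delta_1(B)$ by the larger value $\delta_2(B)$ to produce a map $\delta_1'$ that strictly dominates $\delta_1$, contradicting maximality. The key technical point there is that choosing $B$ maximal among disagreement points ensures subset-preservation is not violated upward.

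Your approach is more structural: you show the set of hierarchy-preserving maps is closed under pointwise join, hence forms a finite join-semilattice with a top element. This is arguably cleaner, and gives slightly more (a lattice structure on the set of all such maps, not just existence of a maximum). The paper's single-coordinate modification is essentially a special case of your join, restricted to one cluster. Your case analysis for subset-preservation is sound; in each case one uses $\delta_i(A)\subsetneq\delta_i(B)$ for the index $i$ that wins at $A$, then the inclusion $\delta_i(B)\subseteq\delta_j(B)$ to pass to the winner at $B$.
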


\begin{proof}
Suppose that $\delta_1: H(T) \ra H(T')$ and $\delta_2: H(T) \ra H(T')$ are distinct maximal hierarchy preserving maps. As they are distinct, there must be a cluster $B$ of $H(T)$ such that $\delta_1$ and $\delta_2$ disagree. In particular, since at the very least $\delta_1(X)=\delta_2(X)=X$, there must be some non-singleton cluster $B$ so that $\delta_1$ and $\delta_2$ disagree, but $\delta_1$ and $\delta_2$ agree on all clusters containing $B$. Denote the inclusion-minimal cluster containing $B$ in $H(T)$ by $C$. Now, as $\delta_1,\delta_2$ are enveloping, both $\delta_1(B)$ and $\delta_2(B)$ contain $B$. Therefore either $\delta_1(B) \subset \delta_2(B)$ or vice versa. Assume without loss of generality that $\delta_1(B) \subset \delta_2(B)$. Define $\delta_1': H(T) \ra H(T')$ as follows:

\begin{equation*}
\delta_1'(M)=\begin{cases}
\delta_1(M), & \text{if $M \ne B$}\\
\delta_2(B), & \text{if $M = B$}.\\
\end{cases}
\end{equation*}

We will show that this is a hierarchy-preserving map, which contradicts the maximality of $\delta_1$. It follows that there is a unique maximal hierarchy-preserving map.

We can immediately see that $\delta_1'$ is certainly enveloping, as for $M \ne B$ we can use the fact that $\delta_1$ is enveloping, and for $M=B$ we can use that $\delta_2$ is enveloping.

We will now prove that $\delta_1'$ is subset-preserving. First suppose $M \subset B$. Then $\delta_1'(M) = \delta_1(M) \subset \delta_1(B) \subseteq \delta_2(B) = \delta_1'(B)$, by definition of $\delta_1'$ and the fact that $\delta_1$ is subset-preserving. Now, suppose that $B \subset M$. As $B$ is inclusion-maximal in $C$, this means that $M \supseteq C$, and we know that $\delta_1'(B) = \delta_2(B) \subset \delta_2(C) = \delta_1(C) \subseteq \delta_1(M)$ by definition of $\delta_1'$ and the fact that $\delta_1$ is subset-preserving again. Hence $\delta_1'$ is subset-preserving.

Thus we have found a hierarchy preserving map $\delta_1': H(T) \ra H(T')$ with $\delta_1' \ne \delta_1$ for which $\delta_1(A) \subseteq \delta_1'(A)$ for all $A \in H(T)$, contradicting the maximality of $\delta_1$. It follows that there is a unique maximal hierarchy preserving map from $T$ to $T'$.
\end{proof}

We now use the hierarchy-preserving maps just introduced, to define a partial order $\le_\HP$ on $RP(X)$. We say $T \le_\HP T'$ if there is a hierarchy-preserving map from $H(T)$ to $H(T')$. We will make use of the notion of a ``maximal vertical subhierarchy'', as defined below.

\begin{definition}
Let $T \in RP(X)$. Let $C_1$ be a cluster in $H(T)$, and suppose that $C_1,\dots,C_k$ are distinct clusters in $H(T)$ with the property that $C_1 \subset \dots \subset C_k$ and there are no other clusters $D$ for which $C_i \subset D \subset C_{i+1}$. Then we say $\{C_1,\dots,C_k\}$ is a \textit{maximal vertical subhierarchy} of $C_1$ in $H(T)$.
\end{definition}

\begin{theorem}\label{t:poset}
The set $RP(X)$ forms a poset under the relation $\le_\HP $.
\end{theorem}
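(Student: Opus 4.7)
The plan is to verify the three poset axioms: reflexivity, antisymmetry, and transitivity. Reflexivity is immediate since, for any $T \in RP(X)$, the identity map on $H(T)$ is the identity on singletons, is trivially enveloping, and is subset-preserving. Transitivity follows from the observation already recorded immediately after Definition~\ref{d:hpm} that the composition of two hierarchy-preserving maps is hierarchy-preserving: if $\delta_1 : H(T) \ra H(T')$ and $\delta_2 : H(T') \ra H(T'')$ witness $T \le_\HP T'$ and $T' \le_\HP T''$, then $\delta_2 \circ \delta_1$ witnesses $T \le_\HP T''$.

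The main obstacle is antisymmetry. Given hierarchy-preserving maps $\delta_1 : H(T) \ra H(T')$ and $\delta_2 : H(T') \ra H(T)$, the composition $\psi := \delta_2 \circ \delta_1$ is a hierarchy-preserving map from $H(T)$ to itself. My strategy is to establish the key sub-claim that the only such self-map is the identity. Once this is in hand, $\delta_2 \circ \delta_1 = \mathrm{id}_{H(T)}$; combined with $A \subseteq \delta_1(A) \subseteq \delta_2(\delta_1(A)) = A$ (using enveloping twice), this forces $\delta_1(A) = A$ for every $A \in H(T)$, so $H(T) \subseteq H(T')$. By the symmetric argument $H(T) = H(T')$, and since a rooted phylogenetic tree is determined by its hierarchy, $T = T'$.

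To establish the sub-claim I plan a cardinality argument built on the chain structure of clusters above a fixed cluster. Suppose $\psi(A) \ne A$ for some $A \in H(T)$; by enveloping $A \subsetneq \psi(A)$. Let $U(A) := \{C \in H(T) : A \subseteq C\}$. Because any two members of $U(A)$ share the elements of $A$, the hierarchy axiom forces them to be comparable, so $U(A)$ is totally ordered by inclusion (this is essentially the maximal vertical subhierarchy of $A$ together with $A$ itself, and its length records how high $A$ sits in $T$). Subset-preservation then sends $U(A)$ injectively into $U(\psi(A))$, but $U(\psi(A)) \subsetneq U(A)$ because $\psi(A) \in U(A) \setminus U(\psi(A))$. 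This contradicts $|U(A)| \le |U(\psi(A))|$, proving the sub-claim.

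I expect the self-map lemma to be the delicate point, since a priori one might worry that enveloping "slack" could propagate consistently around the cycle $H(T) \to H(T') \to H(T)$ and leave a nontrivial idempotent-like map. The chain argument rules this out by converting enveloping strictness at a single cluster into a strict loss of length in the corresponding tower, which is incompatible with the injectivity forced by subset-preservation.
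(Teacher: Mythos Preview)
Your proof is correct and takes a somewhat cleaner route to antisymmetry than the paper's. One small slip: the witness that $U(\psi(A)) \subsetneq U(A)$ should be $A$, not $\psi(A)$, since $\psi(A)$ certainly lies in $U(\psi(A))$; with that corrected the pigeonhole argument goes through.

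The difference in approach is this. You factor antisymmetry through the stand-alone lemma that any hierarchy-preserving self-map $\psi : H(T) \to H(T)$ is the identity, and you prove that lemma by a length count on the finite chain $U(A)$ of clusters containing $A$: subset-preservation makes $\psi$ inject $U(A)$ into the strictly shorter chain $U(\psi(A))$, a contradiction. Enveloping then squeezes $\delta_1$ to the identity, and symmetry gives $H(T)=H(T')$. The paper instead works directly with the two maps $\varphi_1,\varphi_2$: it selects an inclusion-maximal cluster $C_1$ on which $\varphi_1$ fails to be the identity, argues that $\varphi_2$ must fix every cluster in the maximal vertical subhierarchy above $C_1$, and then traps $\varphi_2(\varphi_1(C_1))$ between $C_1$ and itself. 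Your version isolates a reusable structural fact and replaces the element-chase with a one-line counting argument; the paper's version avoids passing to the composite map but is correspondingly more hands-on.
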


\begin{proof}
The observation that the identity map from the hierarchy of a tree to itself is a hierarchy-preserving map gives reflexivity, and the transitivity of hierarchy-preserving maps is also easy to check. It remains to show antisymmetry. 

Suppose $T\le_\HP T'$ and $T'\le_\HP T$. Then there exist hierarchy-preserving maps $\varphi_1:H(T) \ra H(T')$ and $\varphi_2: H(T') \ra H(T)$. We claim that both must be the identity mapping.

Suppose, seeking a contradiction, that $\varphi_1$ is not an identity mapping. Then there must be some cluster $C_1 \in H(T)$ so that $\varphi_1(C_1) =D_1 \ne C_1$, and as $\varphi_1(X) =X$, we can choose $C_1$ such that all clusters containing $C_1$ are mapped to themselves under $\varphi_1$ - that is, $\varphi_1$ acts as the identity on all elements of the maximal vertical subhierarchy $C_1,\dots,C_k$ of $C_1$ except $C_1$ itself. In particular this implies that $C_2,\dots,C_k$ are all clusters of $T'$ as well. 

We first show that $\varphi_2$ is the identity on $C_2,\dots,C_k$. Let $C_i$ be the inclusion-maximal element in this maximal vertical subhierarchy for which $\varphi_2(C_i) \ne C_i$. As $\varphi_2$ is subset-preserving, $\varphi_2(C_i)$ must be some inclusion-maximal subcluster of $C_{i+1}$, and as $\varphi_2$ is enveloping $C_i \subseteq \varphi_2(C_i)$. But $C_1,\dots,C_k$ is a maximal vertical subhierarchy of $T$ and so this means $\varphi_2(C_i) = C_i$, a contradiction. Therefore $\varphi_2$ is the identity on $C_2,\dots,C_k$.

We now finally consider $\varphi_2(D_1)$. As $\varphi_1(C_1)=D_1$ and $\varphi_1$ is enveloping, $C_1 \subset D_1 \subset \varphi_2(D_1)$. Therefore $\varphi_2(D_1)$ must be an element of the maximal vertical subhierarchy of $C_1$, which by subset-preservation and the fact that $\varphi_2$ is the identity on $C_2,\dots,C_k$ forces $\varphi_2(D_1)=C_1$. But then we get that $C_1 \subseteq D_1 \subseteq \varphi_2(D_1) = C_1$ and hence $C_1 = D_1$, contradicting the assumption that $\varphi_1(C_1) =D_1 \ne C_1$. It follows that $\varphi_1$ is the identity mapping and so $T = T'$, giving antisymmetry, and completing the proof.
\end{proof}

For several results in the remainder of this section, we will show given two trees $T \le_\HP T'$, how to construct a tree $T''$, so that $T \le_\HP T'' \le_\HP T'$. 
The tree we construct will be a ``binding'' of $T$.

\begin{definition}
Let $T \in RP(X)$, and let $A_1, \dots, A_m\in H(T)$ (with $m \ge 2$) be distinct inclusion-maximal subclusters of a cluster $D\in H(T)$ such that $\bigcup_{i=1}^m A_i \ne D$. Take $H(T)$, delete all $A_i$ for which $|A_i| > 1$ from $H(T)$, and add $\bigcup_{i=1}^m A_i$, forming a new set of clusters,
\[
\mathcal H:=\left(H(T)\setminus\{A_i \colon |A_i|>1\}\right)\cup\left\{\bigcup_{i=1}^m A_i\right\}.
\]
Then $\mathcal H$ is a hierarchy (see Lemma \ref{UnionHP}), and the corresponding tree is termed a \textit{binding} of $T$ at $\bigcup_{i=1}^m A_i$, and denoted $T_{\bigcup_{i=1}^m A_i}^D$. If a tree $T'$ can be obtained from $T$ by binding, then $T$ is termed an \textit{unbinding} of $T'$.
\end{definition}

\begin{example}
\label{BindingExample}
Let $X=\{a,b,c,d,e,f,g,h \}$ and let $T \in RP(X)$ be such that $P(T) = \{ ab,abc,de, abcde\hspace{-0.1em}f\hspace{-0.1em}g \}$. Let $A= abcde$, $B=abcde\hspace{-0.1em}f$ and $D=abcde\hspace{-0.1em}f\hspace{-0.1em}g$. Then the binding of $T$ at $A$, denoted $T_A^D$, is the tree on $X$ corresponding to the hierarchy with proper clusters $ab,abcde,abcde\hspace{-0.1em}f\hspace{-0.1em}g$. The binding of $T$ at $B$, denoted $T_B^D$, is the tree on $X$ corresponding to the hierarchy with proper clusters $ab,abcde\hspace{-0.1em}f,abcde\hspace{-0.1em}f\hspace{-0.1em}g$; specifically, note that we do not delete $f$ as it is a singleton and the result would no longer be a hierarchy. These three trees can be seen in Figure \ref{Binding}.
\end{example}

\begin{figure}[ht]
\centering
\includegraphics{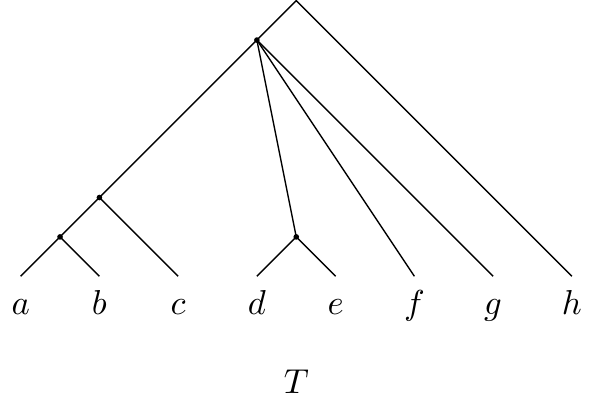}
\par\medskip
\includegraphics{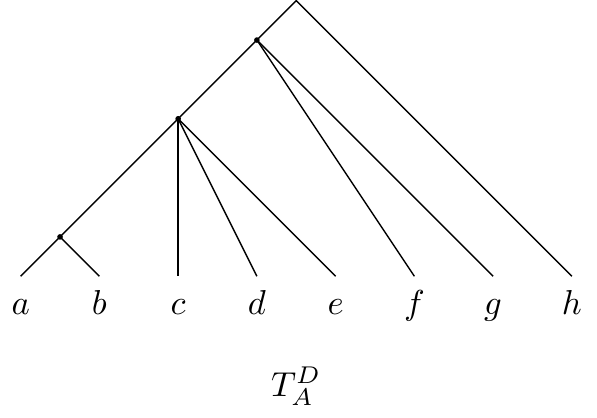}
\hspace{1cm}
\includegraphics{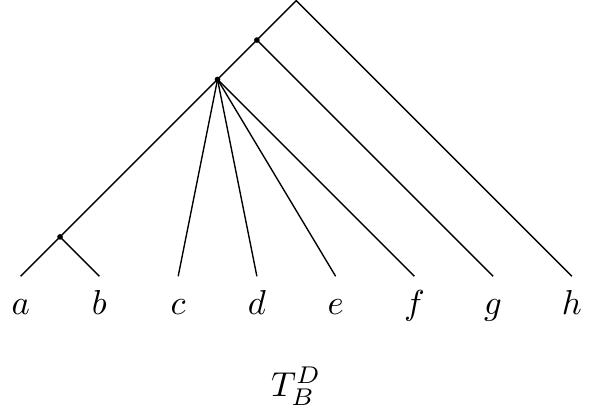}
\caption{Two potential bindings of the tree $T$, as described in Example \ref{BindingExample}, with $A=abcde$, $B=abcde\hspace{-0.1em}f$, and $D=abcde\hspace{-0.1em}f\hspace{-0.1em}g$.}
\label{Binding}
\end{figure}

\begin{lemma}
\label{UnionHP}
Let $T\in RP(X)$, and suppose $A,B$ are distinct inclusion-maximal subclusters of some cluster $D\in H(T)$, where $D \ne A \cup B$. Then the binding of $T$ at $A \cup B$ is a hierarchy. Moreover, if $T_{A \cup B}^D$ is the corresponding tree, then $T <_\HP T_{A \cup B}^D$.
\end{lemma}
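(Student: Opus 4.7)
The plan has three steps: first verify $\mathcal{H}$ is a hierarchy, then construct a hierarchy-preserving map $\delta : H(T) \ra \mathcal{H}$, and finally show strict inequality.

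For the hierarchy check, since only $A\cup B$ is new, I need only show it is nested with or disjoint from every other $C\in\mathcal{H}$. I would split on how $C$ sits relative to $D$. If $C\supseteq D$ then $C\supseteq A\cup B$ trivially. If $C\subsetneq D$, then because $A$ and $B$ are inclusion-maximal subclusters of $D$, no cluster of $T$ sits strictly between any inclusion-maximal subcluster $S$ of $D$ and $D$; together with the disjoint-or-nested property, this forces $C$ to be contained in (or equal to) a single inclusion-maximal subcluster $S$ of $D$. Either $S\in\{A,B\}$, in which case $C\subseteq A\cup B$, or $S$ is distinct from both (and hence disjoint from both, since distinct inclusion-maximal subclusters of $D$ are incomparable and so disjoint), giving $C$ disjoint from $A\cup B$.

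For the map, define
\[
\delta(C)=\begin{cases}A\cup B,&\text{if }C\in\{A,B\}\text{ and }|C|>1,\\ C,&\text{otherwise.}\end{cases}
\]
Enveloping is clear since $A,B\subseteq A\cup B$, and $\delta$ is the identity on singletons by construction. For subset-preservation with $C_1\subsetneq C_2$ in $H(T)$: if both values $\delta(C_i)$ are unchanged from $C_i$, the claim is trivial. If $C_1\in\{A,B\}$ is non-singleton and $C_1\subsetneq C_2$, then inclusion-maximality of $C_1$ in $D$ forces $C_2\supseteq D$, so $\delta(C_1)=A\cup B\subsetneq D\subseteq C_2=\delta(C_2)$ — here the hypothesis $D\ne A\cup B$ is crucial for the strictness of the first inclusion. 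If $C_2\in\{A,B\}$ is non-singleton and $C_1\subsetneq C_2$, then $\delta(C_1)=C_1\subsetneq C_2\subseteq A\cup B=\delta(C_2)$. The case $C_1=A,\ C_2=B$ (or vice versa) never arises since distinct inclusion-maximal subclusters of $D$ are incomparable.

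Strictness then follows because $A\cup B\notin H(T)$: if it were, then since $A\cup B\ne D$, the chain $A\subsetneq A\cup B\subsetneq D$ would contradict inclusion-maximality of $A$ in $D$. Hence $H(T)\ne\mathcal{H}$ and $T\ne T_{A\cup B}^D$, giving $T<_\HP T_{A\cup B}^D$. The main obstacle, though minor, is organising the subset-preservation case analysis so that the singleton edge cases (where $\delta$ must fix $A$ or $B$) do not disrupt the argument; this is resolved by the observation that any cluster strictly containing an inclusion-maximal subcluster of $D$ must already contain $D$.
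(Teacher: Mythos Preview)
Your proof is correct and follows essentially the same approach as the paper's. Both arguments verify the hierarchy property by analysing how an arbitrary cluster sits relative to $A\cup B$ (the paper phrases this as ``$M\cap(A\cup B)\ne\varnothing$ forces $M\supseteq D$ or $M\subseteq A$ or $M\subseteq B$''; you phrase it via the position of $C$ relative to $D$), and both exhibit the obvious map sending $A,B\mapsto A\cup B$ and fixing everything else. Your treatment is in fact slightly more careful than the paper's in two respects: you explicitly handle the singleton edge case in the definition of $\delta$ (so that the identity-on-singletons requirement is visibly satisfied), and you justify why $A\cup B\notin H(T)$ rather than simply asserting it. The only cosmetic omission is that your case split on $C$ versus $D$ does not explicitly name the disjoint case, but this is trivially covered.
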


\begin{proof}
In a minor abuse of notation, let $H(T_{A \cup B}^D)$ be the set of clusters corresponding to the binding of $T$ at $A \cup B$. To confirm that $H(T_{A \cup B}^D)$ is a hierarchy, it suffices to check that any $M \in H(T_{A \cup B}^D)$ for which $M \cap (A \cup B) \ne \varnothing$ is either contained in or contains $A \cup B$.

If $M \cap (A \cup B)$ is non-empty, then $M \cap A$ or $M \cap B)$ is non-empty. Hence, since $M$ is a cluster in $H(T)$, and as $A,B$ are inclusion-maximal in $D$, it follows that $M$ either contains $D$ (and so contains $A \cup B$), or is a subset of $A$ or $B$ (and thus is contained in $A \cup B$). Thus $H(T_{A \cup B}^D)$ is a hierarchy.

The second statement of the lemma follows for two reasons. Firstly, as $A \cup B$ is certainly not a cluster in $T$ we know that $T \ne T_{A \cup B}^D$. Secondly, because the map from $H(T)$ to $H(T_{A \cup B}^D)$ that is the identity on all clusters except for $A$ and $B$, which are mapped to $A \cup B$, is clearly hierarchy-preserving.
\end{proof}

\begin{theorem}
\label{t:Union2OfMore}
Suppose $T \le_\HP T'$ and $\delta: H(T) \ra H(T')$ is a maximal hierarchy preserving map. If $A,B,C$ are three inclusion-maximal subclusters of some cluster $D$, and $\delta(A)=\delta(B)$ contains $A \cup B \cup C$, then $T <_\HP T_{A \cup B}^D < T'$.
\end{theorem}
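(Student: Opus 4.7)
The first inequality $T <_\HP T_{A \cup B}^D$ follows almost immediately from Lemma~\ref{UnionHP}, once we check that $A \cup B \ne D$. Since $A, B, C$ are three distinct inclusion-maximal subclusters of $D$, they are pairwise disjoint, so $C \subseteq D$ but $C \cap (A \cup B) = \varnothing$, giving $A \cup B \subsetneq D$.

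For $T_{A \cup B}^D <_\HP T'$, my plan has two components. First I would construct an explicit hierarchy-preserving map $\delta'\map{H(T_{A \cup B}^D)}{H(T')}$ by setting $\delta'(A \cup B) := \delta(A) = \delta(B)$ and $\delta'(M) := \delta(M)$ for every other cluster $M$ (the hypothesis forces $A$ and $B$ to be non-singletons, since otherwise $\delta(A) = A$ could not contain $B \cup C$, so both are removed in the binding). Enveloping is immediate because $A \cup B \subseteq \delta(A)$. Subset-preservation reduces to pairs involving $A \cup B$: if $M \subsetneq A \cup B$ is a cluster of $T_{A \cup B}^D$, then $M$ lies entirely in $A$ or in $B$ (as they are disjoint), and subset-preservation of $\delta$ closes the case; if $N \supsetneq A \cup B$, then inclusion-maximality of $A$ in $D$ forces $N \supseteq D$, and applying subset-preservation of $\delta$ to $A \subsetneq D \subseteq N$ gives $\delta(N) \supsetneq \delta(A) = \delta'(A \cup B)$.

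The main obstacle will be showing the inequality is strict, i.e.\ $T_{A \cup B}^D \ne T'$. I plan to argue by contradiction with a pigeonhole count along the ancestor chain of $D$ in $T$. Let $D = D_0 \subsetneq D_1 \subsetneq \cdots \subsetneq D_k = X$ be the clusters of $T$ containing $D$, and suppose for contradiction that $T_{A \cup B}^D = T'$. Any cluster of $T'$ containing $A \cup B \cup C$ cannot be $A \cup B$ (which misses $C$) and cannot be a cluster of $T$ properly contained in $D$ (by inclusion-maximality of $A$ in $D$ such a cluster could only be $A$, which misses $C$), so it must lie in the $(k+1)$-element set $\{D_0, \dots, D_k\}$. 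On the other hand, applying subset-preservation of $\delta$ along the chain $A \subsetneq D_0 \subsetneq D_1 \subsetneq \cdots \subsetneq D_k$ produces a strictly increasing sequence
\[
\delta(A) \subsetneq \delta(D_0) \subsetneq \delta(D_1) \subsetneq \cdots \subsetneq \delta(D_k) = X
\]
of $k+2$ distinct clusters, every term of which contains $A \cup B \cup C$ and therefore lies in $\{D_0, \dots, D_k\}$---a pigeonhole contradiction. Notably, the maximality of $\delta$ plays no role here; any hierarchy-preserving map with $\delta(A) = \delta(B) \supseteq A \cup B \cup C$ suffices.
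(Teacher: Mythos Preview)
Your proof is correct. The construction of $\delta'$ and the verification that it is hierarchy-preserving match the paper's argument essentially line for line (you even add two small clarifications the paper leaves implicit: that $A\cup B\subsetneq D$ because $C$ is disjoint from $A\cup B$, and that $A,B$ are necessarily non-singletons so both are removed in the binding).

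The genuine difference is in the strictness argument $T_{A\cup B}^D\ne T'$. The paper uses the maximality hypothesis directly: if $T'=T_{A\cup B}^D$, it asserts that the unique maximal map $\delta$ must be the identity except on $A,B$, whence $\delta(A)=A\cup B$ cannot contain $C$. Your pigeonhole argument along the ancestor chain $D=D_0\subsetneq\cdots\subsetneq D_k=X$ is different and, as you note, does not use maximality at all: it works for any hierarchy-preserving map satisfying $\delta(A)=\delta(B)\supseteq A\cup B\cup C$. This makes your version marginally stronger than the stated theorem, and arguably cleaner, since the paper's assertion that the maximal map must be the ``obvious'' one is stated without justification (it is true, but requires a short top-down induction to verify). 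The paper's route is shorter once one accepts that claim; yours is self-contained.
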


\begin{proof}
By Lemma \ref{UnionHP} we know that the set of clusters $H(T^D_{A\cup B})$ is a hierarchy, and that $T <_{\HP} T_{A\cup B}^D$. We can also see that $T_{A\cup B}^D \ne T'$ - if they were equal, the maximal hierarchy-preserving map $\delta$ from $T$ to $T_{A\cup B}^D = T'$ must be the identity on all clusters of $H(T)$ except $A$ and $B$, and map both $A$ and $B$ to $A \cup B$. But then $\delta(A)$ could not contain $A \cup B \cup C$, contradicting the assumptions of the theorem and showing $T_{A\cup B}^D \ne T'$.

It therefore suffices to show that there is a hierarchy-preserving map $\delta ' : H(T_{A\cup B}^D) \ra H(T')$. Noting that all clusters in $H(T_{A\cup B}^D)$ other than $A\cup B$ are also clusters in $H(T)$, for any cluster $M\in H(T_{A\cup B}^D)$, define
\begin{equation*}
\delta'(M)=\begin{cases}
\delta(M), & \text{if $M \ne {A\cup B}$}\\
\delta(A), & \text{if $M = {A\cup B}$}.\\
\end{cases}
\end{equation*}
We claim that $\delta'$ is a hierarchy-preserving map from $H(T_{A\cup B}^D)$ to $H(T')$ as required.

Certainly $\delta'$ is enveloping as $\delta$ is enveloping (so $M \subseteq \delta'(M)$ for all $M \ne A \cup B$), and $\delta(A\cup B) = \delta(A) \supseteq (A \cup B \cup C) \supset (A \cup B)$.

We now check subset preservation. For $Y$ and $Z$ clusters in $H(T_{A\cup B}^D)$, we need to check $Y\subset Z$ implies $\delta'(Y)\subset\delta'(Z)$. If neither are equal to $A\cup B$, then this follows immediately from the definition of $\delta'$ and the properties of $\delta$. It remains to check the two cases: (i) $Y=A\cup B\subset Z$, and (ii) $Y\subset A\cup B=Z$.

In the first case, $A\cup B \subset Z$ implies $D \subseteq Z$, because $A$ and $B$ are inclusion-maximal subclusters of $D$. Then $\delta'(A\cup B)=\delta(A)$ by definition of $\delta'$, and $\delta(A) \subset \delta(D) \subseteq \delta(Z)$ because $\delta$ is subset-preserving and $B,D,Z$ are all clusters in $H(T)$. Finally noting that $\delta'(Z)=\delta(Z)$ completes this case.

In the second case, $Y\subset A\cup B$ implies $Y\subset A$ or $Y\subset B$ because $Y, A, B$ are all part of a single hierarchy, $H(T)$. Assuming without loss of generality that $Y\subset A$, we have: $\delta'(Y)=\delta(Y)\subset\delta(A)$ by subset-preservation of $\delta$; and $\delta(A)=\delta'(A\cup B)$ by definition of $\delta'$. Therefore $\delta'(Y)\subset \delta'(A\cup B)=\delta'(Z)$, as required. 
\end{proof}

We finish this section with a result describing the maximal elements under the partial order $\le_\HP$. Note that the $\le_\HP$-minimal element is the star tree.

\begin{proposition}\label{p:max.elts.RPX}
The set of $\le_\HP$-maximal elements of $RP(X)$ is precisely $BRP(X)$, the set of binary trees.
\end{proposition}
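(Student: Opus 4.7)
My plan is to prove the two inclusions ``binary implies $\le_\HP$-maximal'' and ``$\le_\HP$-maximal implies binary'' separately. The second (in contrapositive form: non-binary implies non-maximal) is immediate from Lemma \ref{UnionHP}. If $T\in RP(X)$ is not binary, then some internal vertex of $T$ has out-degree at least three, so the corresponding cluster $D\in H(T)$ has three distinct inclusion-maximal subclusters $A_1,A_2,A_3$. In particular $A_1\cup A_2\subsetneq A_1\cup A_2\cup A_3\subseteq D$, so $A_1\cup A_2\ne D$, and Lemma \ref{UnionHP} yields a binding $T^D_{A_1\cup A_2}$ with $T<_\HP T^D_{A_1\cup A_2}$. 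Hence $T$ is not $\le_\HP$-maximal.

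For the harder direction, suppose $T\in BRP(X)$ and $\delta\map{H(T)}{H(T')}$ is a hierarchy-preserving map. The aim is to show $\delta$ is the identity on $H(T)$, for then $H(T)\subseteq H(T')$; since binary trees attain the maximum cluster count $|H(T)|=2|X|-1$ among trees in $RP(X)$, this forces $H(T)=H(T')$ and hence $T=T'$. Suppose for contradiction that $\delta$ is not the identity, and choose $A^*\in H(T)$ inclusion-maximal with $\delta(A^*)\ne A^*$. Since hierarchy-preserving maps fix $X$, we have $A^*\ne X$, so there is a smallest cluster $D\in H(T)$ strictly containing $A^*$, and by maximality of $A^*$, $\delta(D)=D$. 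Binarity of $T$ gives $D$ exactly two inclusion-maximal subclusters $A^*$ and some $A'$, which partition $D$. Enveloping together with strict subset-preservation along $A^*\subsetneq D$ and $A'\subsetneq D$ (and $\delta(D)=D$) forces
\[\delta(A^*)=A^*\cup S \text{ with } S\subsetneq A',\qquad \delta(A')=A'\cup S' \text{ with } S'\subsetneq A^*,\]
and $S\ne\varnothing$ by the hypothesis $\delta(A^*)\ne A^*$.

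The crux is the case analysis on how $\delta(A^*)$ and $\delta(A')$ sit in the hierarchy $H(T')$. A direct computation using $A^*\cap A'=\varnothing$, $S\subseteq A'$, and $S'\subseteq A^*$ gives $\delta(A^*)\cap\delta(A')=S\cup S'\supseteq S\ne\varnothing$, so $\delta(A^*)$ and $\delta(A')$ must be comparable in $H(T')$. If $\delta(A^*)\subseteq\delta(A')$, intersecting both sides with $A^*$ yields $A^*\subseteq S'$, contradicting $S'\subsetneq A^*$; symmetrically, if $\delta(A')\subseteq\delta(A^*)$ then $A'\subseteq S$, contradicting $S\subsetneq A'$. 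Either case contradicts the choice of $A^*$, so $\delta$ must be the identity. I expect the main obstacle to be this case analysis, which crucially exploits binarity to force both $\delta(A^*)$ and $\delta(A')$ strictly inside $D=\delta(D)$; without binarity, a third sibling of $A^*$ inside $D$ provides exactly the ``room'' needed for a non-trivial hierarchy-preserving map, as in the example following Definition \ref{d:hpm}.
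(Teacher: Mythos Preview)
Your proof is correct and follows the same two-part skeleton as the paper: non-binary trees are not maximal because one can bind two of at least three siblings (you cite Lemma~\ref{UnionHP}, the paper cites Theorem~\ref{t:Union2OfMore}, but the lemma already suffices), and binary trees are maximal because any hierarchy-preserving map out of a binary tree is forced to be the identity on clusters. The second half is where the arguments diverge. The paper assumes \emph{both} $T$ and $T'$ are binary and, working top-down from a cluster $Y$ already known to satisfy $\varphi(Y)=Y$, matches the two children of $Y$ in $T$ against the two children of $Y$ in $T'$ to conclude $\varphi$ fixes them as well; the case of non-binary $T'$ is then handled implicitly (push $T'$ up to a binary tree using the first part and invoke antisymmetry). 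You instead make no hypothesis on $T'$: taking an inclusion-maximal $A^*$ with $\delta(A^*)\ne A^*$ and its unique sibling $A'$ under $D$ (this is where binarity of $T$ enters), you compute $\delta(A^*)\cap\delta(A')=S\cup S'\ne\varnothing$, force comparability in $H(T')$, and reach a contradiction by intersecting with $A^*$ or $A'$. You then close with the cardinality bound $|H(T')|\le 2|X|-1$ to turn $H(T)\subseteq H(T')$ into equality. Your route is slightly more self-contained---no detour through a binary tree above $T'$ is needed---at the modest cost of invoking the standard fact that binary trees maximise the cluster count.
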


\begin{proof}
First, if a tree is non-binary, then its hierarchy has a cluster with at least three inclusion-maximal subclusters. Therefore, by Theorem~\ref{t:Union2OfMore}, we can bind two of them to create a tree that is strictly greater in the partial order. So non-binary trees are not $\le_\HP$-maximal.

Second, if two trees $T$ and $T'$ are binary and there is a hierarchy-preserving map between them, they must be equal, as follows.

Let $\varphi:H(T) \ra H(T')$ be a hierarchy-preserving map. Observe that $\varphi$ maps $X$ to $X$ (by definition of a hierarchy-preserving map), and let $Y$ be a non-singleton cluster of $T$ such that for every cluster $Z$ in the maximal vertical subhierarchy of $Y$, $\varphi(Z)=Z$ . As $T$ and $T'$ are binary, $Y$ has two inclusion-maximal subclusters in each of $H(T)$ and $H(T')$. Let $C_1$ and $C_2$ be the inclusion-maximal clusters of $Y$ in $H(T)$, and $D_1$ and $D_2$ be the inclusion-maximal clusters of $Y$ in $H(T')$. As $\varphi$ is subset-preserving, $C_1$ and $C_2$ must each be mapped to some subcluster of $D_1$ and $D_2$. As $\varphi$ is enveloping, this implies that each of $C_1$ and $C_2$ are subsets of $D_1$ or $D_2$. Additionally, $C_1 \cup C_2 = Y = D_1\cup D_2$, which forces $C_1=D_1$ and $C_2=D_2$ or $C_1=D_2$ and $C_2=D_1$. It follows that $\varphi$ is the identity on all elements of $H(T)$, so $T=T'$.
\end{proof}

We will often consider the partial order restricted to the set of trees below every element of a set of trees $P$. 

If $P=\{ T,\dots,T_k \}$ is a set of trees, then the set of trees $T$ for which there exists a hierarchy-preserving map $\delta_i:H(T) \ra H(T_i)$ for each $i$ is denoted by $HP(P)$. In other words, 
\[
HP(P):=\{T\in RP(X)\mid T\le_\HP T_i, \text{for all }T_i\in P\}.
\]

Recall the following standard definition in phylogenetics

\begin{definition}
Let $T, T'$ be rooted phylogenetic trees on the same set $X$. Then if every cluster of $T$ is a cluster of $T'$, $T'$ is referred to as a \textit{refinement} of $T$, denoted $T \preceq T'$.
\end{definition}

In particular, observe that if $T$ is the star tree $S$ or $T'$ is a refinement of $T$, then a hierarchy-preserving map from $T$ to $T'$ will always exist, namely the identity map on clusters in $T$. Therefore $HP(P)$ is always non-empty, as it will certainly contain $S$. We further note that if $P$ consists of the single tree $T$, then $HP(P)$ can immediately be seen to be a bounded lattice, with least element $S$ and greatest element $T$, as every element of $HP(P)$ has a hierarchy-preserving map into $T$ by definition. It follows that if $P=(T,\dots,T_k)$, then $HP(P)$ forms the poset obtained by taking the intersection of the bounded lattices corresponding to each tree in $P$. 

In fact, as $T'$ being a refinement of $T$ implies there is a hierarchy-preserving map from $T$ to $T'$, the partial order $\le_\HP$ actually \textit{refines} refinement. By this we mean that if $T \preceq T'$, then $T \le_\HP T'$, or equivalently, that edges in $RP(X)$ under the refinement partial order correspond to paths in $RP(X)$ under $\le_\HP$ that consist either entirely of up-moves or entirely of down-moves.

\begin{proposition}
Let $T \preceq T'$ in $RP(X)$. Then $T \le_\HP T'$ in $RP(X)$ .
\end{proposition}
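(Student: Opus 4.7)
The plan is to exhibit an explicit hierarchy-preserving map from $H(T)$ to $H(T')$, which by definition of $\le_\HP$ will give the conclusion $T \le_\HP T'$. The natural candidate, already flagged in the discussion preceding the proposition, is the inclusion map $\delta \map{H(T)}{H(T')}$ defined by $\delta(A)=A$ for every $A\in H(T)$.

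The first step is to check that this map is well-defined. Since $T \preceq T'$ means by definition that every cluster of $T$ is a cluster of $T'$, we have $H(T) \subseteq H(T')$, so $\delta(A)=A$ does indeed land in $H(T')$ for every $A\in H(T)$. In particular, $\delta$ restricts to the identity on singletons (which are clusters of every tree on $X$), satisfying the first requirement of Definition \ref{d:hpm}.

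Next I would verify the two properties of a hierarchy-preserving map. Both are essentially immediate: the enveloping property $A\subseteq\delta(A)$ reduces to $A\subseteq A$, and the subset-preserving property follows from the fact that $A\subset B$ trivially implies $\delta(A)=A\subset B=\delta(B)$. Hence $\delta$ is a hierarchy-preserving map, certifying that $T\le_\HP T'$.

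There is no real obstacle here — the proof is essentially the observation that the inclusion $H(T)\hookrightarrow H(T')$ guaranteed by refinement is automatically enveloping and subset-preserving. The only thing to be careful about is to state explicitly that refinement gives $H(T)\subseteq H(T')$ so that the identity on clusters makes sense as a map into $H(T')$.
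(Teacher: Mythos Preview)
Your proposal is correct and matches the paper's approach exactly: the paper states (in the paragraph immediately preceding the proposition) that when $T'$ is a refinement of $T$, the identity map on clusters of $T$ is a hierarchy-preserving map into $H(T')$, and your proof simply makes this observation precise by verifying well-definedness and the two axioms.
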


The converse of this proposition is not true, that is, the existence of a hierarchy-preserving map from $T$ to $T'$ does not imply refinement. One can see this, for example, from either binding in Figure \ref{Binding}.

\section{An induced metric on the set of rooted phylogenetic trees}

The hierarchy-preserving maps, and associated partial order on the set of phylogenetic trees, allow us to define a new metric on the set of rooted phylogenetic trees. In this section we set out the metric, and prove some of its key properties, including information about the neighbourhood of a tree and the diameter of the space.

Let $\mathcal{H}(X)$ denote the Hasse diagram of $RP(X)$ under $\le_\HP$. That is, $\mathcal{H}(X)$ is the symmetric directed graph $(RP(X),E)$ where $(T_1,T_2) \in E$ if and only if  for either $i=1,j=2$ or $i=2,j=1$, we have $T_i \le_\HP T_j$ and for any tree $T_3$ such that $T_i \le_\HP T_3 \le_\HP T_j$, either $T_3= T_i$ or $T_3=T_j$ (that is, $T_j$ covers $T_i$ under the $\le_\HP$ relation). We then define the distance $d_\HP(T,T')$ to be the geodesic distance from $T$ to $T'$ in $\mathcal{H}(X)$. We know that $\mathcal{H}(X)$ is connected as every tree has a path to the star tree, so $d_\HP$ is certainly a metric.

The following theorem shows that if two trees are distance one apart in $\HH(X)$, then one is a binding of the other - in particular the binding of a pair of clusters in the hierarchy.

\begin{theorem}
\label{OneMove}
Let $T,T'$ be trees. Then $d_\HP(T,T')=1$ iff $T'=T_{A \cup B}^V$, for some pair of distinct clusters $A,B$ that are inclusion-maximal in $V$ in $H(T)$, or the reverse. 
\end{theorem}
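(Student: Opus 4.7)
The plan is to prove the two directions of the iff separately.

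For the ($\Leftarrow$) direction, assume $T'=T_{A\cup B}^V$. Lemma \ref{UnionHP} already gives $T<_\HP T'$, so the only task is to rule out a strictly intermediate $T''$. First I identify the unique maximal hierarchy-preserving map $\delta\map{H(T)}{H(T')}$: a cascade argument shows that if $\delta$ strictly shifted some $M\in H(T)\setminus\{A,B\}$ upward, subset-preservation would propagate the shift all the way up to $\delta(X)\supsetneq X$, which is impossible; hence $\delta$ is the identity on $H(T)\setminus\{A,B\}$ and collapses each non-singleton of $\{A,B\}$ to $A\cup B$. For a hypothetical intermediate $T''$ with maximal HP maps $\psi\map{H(T)}{H(T'')}$ and $\phi\map{H(T'')}{H(T')}$, the composite $\phi\circ\psi$ is HP and hence pointwise dominated by $\delta$ (by the same dominance argument that drives the proof of Theorem \ref{t:MaximalMap}); combined with enveloping this forces $\psi(M)=M$ for every $M\in H(T)\setminus\{A,B\}$. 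I then split on whether $A\cup B\in H(T'')$: if yes, then $H(T'')\supseteq H(T')$, and antisymmetry of $\le_\HP$ applied to $T'\le_\HP T''\le_\HP T'$ gives $T''=T'$; if no, maximality of $\psi$ forces $\psi(A)=A$ and $\psi(B)=B$, so $H(T)\subseteq H(T'')$, and a second cascade argument rules out every candidate extra cluster $N\in H(T'')\setminus H(T)$ (the smallest $M\in H(T)$ strictly containing $N$ would force $\phi(N)$ into the empty set of $T'$-clusters strictly between $N$ and $M$), yielding $T''=T$.

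For the ($\Rightarrow$) direction, assume $d_\HP(T,T')=1$, so WLOG $T<_\HP T'$. My plan is to prove the stronger statement that whenever $T<_\HP T'$ there is a binding $T_{A\cup B}^V$ with $T<_\HP T_{A\cup B}^V\le_\HP T'$; the hypothesis $d_\HP=1$ then forces that binding to equal $T'$. I construct the binding in two cases. If $T\preceq T'$, I pick an inclusion-minimal $C\in H(T')\setminus H(T)$; its $T'$-children all lie in $H(T)$, and letting $V$ be the smallest cluster of $H(T)$ strictly containing $C$, the compatibility of the hierarchies forces $C=V_{j_1}\cup\cdots\cup V_{j_s}$ for $s\ge 2$ inclusion-maximal subclusters of $V$ in $H(T)$. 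I set $A=V_{j_1}$, $B=V_{j_2}$ and define the HP map to $T'$ as the identity on shared clusters and $A\cup B\mapsto C$. If $T\not\preceq T'$, I instead pick an inclusion-minimal $N\in H(T')\setminus H(T)$, write its $T'$-children $D_1,\ldots,D_q$ (all in $H(T)$ by minimality), let $V=\mathrm{LCA}_T(D_1,\ldots,D_q)$ (strictly containing $N$ because $V\in H(T)$ but $N\notin H(T)$), and pick two $T$-children $V_{j_1},V_{j_2}$ of $V$ each containing some $D_i$; the HP map sends $A\cup B\mapsto N$ (or to the smallest $T'$-cluster containing $A\cup B$) and agrees with the maximal $\delta$ on the rest.

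The hard part will be the non-refinement case. Subset-preservation of the constructed HP map must be checked, using $V\supsetneq N$ and the maximality of $\delta$ to ensure strict nesting at the key step. There is also a potential ``bad'' subcase in which some chosen $V_{j_k}$ is not contained in $N$; in that case $V_{j_1}\cup V_{j_2}\ne N$ so the binding inserts a cluster not in $H(T')$, giving a tree strictly between $T$ and $T'$, which directly contradicts $d_\HP(T,T')=1$. Thus under the hypothesis $d_\HP(T,T')=1$ the bad subcase is excluded, $s$ must be exactly $2$ and both $V_{j_k}\subseteq N$ with $V_{j_1}\cup V_{j_2}=N$, and the binding equals $T'$ on the nose; the analogous potential failure $s\ge 3$ in the refinement case is eliminated the same way since then the intermediate binding would have new cluster $V_{j_1}\cup V_{j_2}\subsetneq C$, again contradicting $d_\HP=1$.
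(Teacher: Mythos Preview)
Your $(\Leftarrow)$ direction is correct and close in spirit to the paper's argument; your case split on whether $A\cup B\in H(T'')$ is essentially equivalent to the paper's split on whether $\varphi_1(A)\cap\varphi_1(B)=\varnothing$, and both hinge on the same facts (the composite $\phi\circ\psi$ is pointwise dominated by the unique maximal map, forcing $\psi$ to be the identity off $\{A,B\}$, after which each branch is routine).

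Your $(\Rightarrow)$ direction has a genuine gap in the non-refinement case. The recipe ``pick an inclusion-minimal $N\in H(T')\setminus H(T)$, set $V=\mathrm{LCA}_T$ of its $T'$-children, bind two $T$-children of $V$'' need not produce a valid binding, because $V$ can have exactly two $T$-children, in which case $V_{j_1}\cup V_{j_2}=V$ and $T^V_{V_{j_1}\cup V_{j_2}}$ is undefined. Concretely, take $X=\{1,\dots,5\}$, $T$ with proper clusters $\{1,2\},\{1,2,3\}$, and $T'$ with proper clusters $\{1,3\},\{1,2,3\},\{1,2,3,4\}$. Then $T<_\HP T'$ via $\{1,2\}\mapsto\{1,2,3\}$, $\{1,2,3\}\mapsto\{1,2,3,4\}$; this is not a refinement; the unique inclusion-minimal $N$ is $\{1,3\}$; its $T'$-children are $\{1\},\{3\}$; and $V=\{1,2,3\}$ has exactly the two $T$-children $\{1,2\},\{3\}$, so your binding collapses to $V$ itself. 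An intermediate binding does exist here (namely $T^X_{\{1,2,3,4\}}$), but your recipe does not find it. You therefore do not prove the ``stronger statement'' you announce, and you cannot retreat to ``$d_\HP=1$ rules this out'' without circularity: knowing that $V$ must have at least three $T$-children under $d_\HP=1$ is tantamount to already knowing $T'$ is a single binding of $T$.

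A related looseness: your ``bad subcase'' paragraph silently assumes both that the binding is defined and that $T^V_{A\cup B}\le_\HP T'$. The latter is actually true when the binding is defined --- one can send $A\cup B$ to the $T'$-child of $\delta(V)$ that contains $N$ and keep $\delta$ elsewhere --- but this needs to be said, not assumed. The paper sidesteps all of this by working top-down rather than bottom-up: it locates a cluster $C$ common to both trees (with common maximal vertical subhierarchy) at which the children first differ, so that the maximal map $\delta$ is the identity on and above $C$; then each $T$-child $A_h$ of $C$ is contained in a unique $T'$-child $B_i$ of $C$, and Theorem~\ref{t:Union2OfMore} (essentially Lemma~\ref{UnionHP} plus a check) supplies the intermediate bindings that force every $B_i$ to be a union of at most two $A_h$'s with only one nontrivial union. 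That top-down localisation at a common cluster is the ingredient your plan is missing.
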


\begin{proof}
Suppose first that $d_\HP(T,T')=1$ and without loss of generality that $T \le_\HP T'$. Then $T'$ covers $T$ under $\le_\HP$, that is, for any tree $T''$ such that $T \le_\HP T'' \le_\HP T'$, either $T''= T$ or $T''=T'$.
Let $\delta: H(T) \ra H(T')$ be the maximal hierarchy-preserving map between them, as defined in Definition \ref{d:hpm}.

Now, let $C$ be a cluster common to $T$ and $T'$ such that the maximal vertical subhierarchy of $C$ is common to both trees, and contains $X$, but that the inclusion-maximal subclusters of $C$ are different in $T$ and $T'$. Such a cluster always exists since $C=X$ is possible. Denote the distinct inclusion-maximal subclusters of $C$ in $H(T)$ by $A_1,\dots,A_j$, and the distinct inclusion-maximal subclusters of $C$ in $H(T')$ by $B_1,\dots,B_k$. 

The hierarchy-preserving map $\delta:H(T) \to H(T')$ acts as the identity on each element of the maximal vertical subhierarchy of $C$, for the following reasons. If $\delta$ is the identity on any cluster $D$, and that $D'$ is a subcluster of $D$ in both trees, then $D'$ must map to a subcluster of $D$ (by subset-preservation), that also contains $D'$ (enveloping). This forces $D'$ in $T$ to map to $D'$ in $T'$. Since $\delta$ acts as the identity on $X$, this forces it to act as the identity on the whole maximal vertical subhierarchy. 

Considering the subclusters of $C$ in $T$ and $T'$, this means that $\delta(A_h)= B_i$ for some unique $B_i$, and thus that $A_h \subseteq B_i$. Furthermore, each $B_i$ must be the union of some subcollection of the $A_h$'s. 

Suppose there is some $B_i$ that is the union of more than two $A_h$'s. Then by Lemma \ref{UnionHP} there exists a binding of two of those $A_h$'s that produces a tree that also maps into $T'$, contradicting the fact that $d_\HP(T,T')=1$. Hence each $B_i$ is the union of at most two $A_h$'s.

As $T \ne T'$, there must exist at least one such cluster, so suppose $B_j = A_k \cup A_\ell$. Now, suppose that there is any other cluster $A \in H(T)$ such that $\delta(A) \ne A$, or any cluster $B \in H(T')$ that is not the image of some cluster in $H(T)$. Then the binding $T_{A_k \cup A_\ell}$ is certainly different from both $T$ and $T'$, but we can see that $T <_\HP T_{A_k \cup A_\ell} <_\HP T'$, which is a contradiction as $d_\HP(T,T')=1$. It follows that the only difference between the hierarchies of $T$ and $T'$ is that $T$ contains $A_k$ and $A_\ell$ while $T'$ contains $B_j$, and the result follows. 

We now suppose, without loss of generality, that $T'=T_{A \cup B}^V$, for some pair of clusters $A,B$ that are inclusion-maximal in $V$ in $H(T)$. Then certainly $T \le_\HP T'$, so in order to show $d(T,T')=1$ it only remains to show that $T'$ covers $T$ - that is, that  if there is a tree $T''$ so that $T \le_\HP T'' \le_\HP T'$, then $T''=T$ or $T'' = T'$.

Let $T''$ be a tree so that $T \le_\HP T'' \le_\HP T'$, and let $\varphi_1: H(T) \ra H(T'')$ and $\varphi_2: H(T'') \ra H(T')$ be hierarchy-preserving maps. By Theorem \ref{t:MaximalMap}, there is a unique maximal hierarchy-preserving map $\varphi_{\max} : H(T) \ra H(T')$, and this must certainly be the map that is the identity on all clusters of $T$ except for $A$ and $B$, which are mapped to $A \cup B$ in $T'$. The composition of two hierarchy-preserving maps is also a hierarchy-preserving map, so $\varphi_2 \circ \varphi_1$ is a hierarchy-preserving map too, and due to $\varphi_{\max}$ being maximal we have that $\varphi_2 \circ \varphi_1(A) \subseteq \varphi_{\max}(A)$ for all clusters $A$ in $T$. Therefore $\varphi_2 \circ \varphi_1$ is the identity on all clusters of $T$ except for $A$ and $B$, and $\varphi_2 \circ \varphi_1(A)=\varphi_2 \circ \varphi_1(B) \subseteq A \cup B$. Furthermore, this implies that 
\[H(T) \cap H(T') \cap H(T'') = H(T) \backslash \{A,B\} = H(T') \backslash \{A \cup B\}\]
and both $\varphi_1$ and $\varphi_2$ are the identity on this intersection. 

There are two possibilities - either $\varphi_1(A) \cap \varphi_1(B) = \varnothing$ or not.

\begin{enumerate}
\item \textit{$\varphi_1(A) \cap \varphi_1(B) = \varnothing$:} As $\varphi_1$ is enveloping, $B \subseteq \varphi_1(B)$ and therefore $\varphi_1(A) \cap B = \varnothing$. But $A \subseteq \varphi_1(A) \subseteq \varphi_2 \circ \varphi_1(B) \subseteq A \cup B$, so $\varphi_1(A)=A$. Similarly, $\varphi_1(B)=B$. 

Let $M$ be some cluster of $H(T'')$. If $\varphi_2(M) \ne A \cup B$, then $\varphi_2(M) = C$ for some $C$ in $H(T') \backslash \{A \cup B \} = H(T) \cap H(T') \cap H(T'')$. But $\varphi_2$ is the identity on all elements of this intersection, so $C \in H(T)$. On the other hand, if $\varphi_2(M) = A \cup B$, as $\varphi_2$ is enveloping $M \cap A$ or $M \cap B$ is non-empty.  Then as $M$ and $A$ are both clusters in the same hierarchy $H(T'')$, so $M$ contains or is contained in $A$ or $B$. But if $M$ strictly contains or is strictly contained in $A$ or $B$, then $M$ could not map to $A \cup B$ as $\varphi_2(A) = \varphi_2(B) = A \cup B$ and this would contradict subset-preservation. This leads us to conclude that $M=A$ or $M=B$, which are again in $H(T)$. Therefore every cluster in $H(T'')$ is in $H(T)$, so as $T \le_\HP T''$ this gives us $T'' = T$.

\item \textit{$\varphi_1(A) \cap \varphi_1(B) \ne \varnothing$:} Without loss of generality we can assume $\varphi_1(A) \supseteq \varphi_1(B)$, then as $\varphi_1$ is enveloping $A \cup B \subseteq \varphi(A)$. Furthermore, as $\varphi_2 \circ \varphi_1(A) \subseteq A \cup B$ this forces $\varphi_1(A) = A\cup B$. As $H(T'')$ contains every cluster of $H(T')$ and $T'' \le_\HP T'$ it follows that $T'' = T'$ .
\end{enumerate}

As $T'' = T$ or $T'' = T'$, it follows $T'$ covers $T$ under $\le_\HP$ and hence that $d(T,T')=1$.
\end{proof}

For the rest of this section we will focus on movement around the Hasse diagram of trees, $\HH(X)$.

\begin{definition}
Let $T,T'$ be trees in $RP(X)$, and $e = (T,T') \in E(\HH(X))$. Then $e$ is referred to as an \textit{up-move} if $T \le_\HP T'$ and a \textit{down-move} if $T' \le_\HP T$.
\end{definition}

Note that by Theorem \ref{OneMove}, an up-move takes one from a tree to a binding of two clusters of that tree (that are inclusion-maximal in some third cluster), and a down move does the reverse. See Figures \ref{f:upmoves} and \ref{f:downmoves} for some examples.

Let us now clearly elucidate what a down-move actually \textit{does}. One can consider the up-move to be the deletion of some distinct pair of clusters $A,B \in H(T)$ that are inclusion-maximal in a third cluster $C$, with $A\cup B\subsetneq C$ (unless $A$ or $B$ are singletons in which case only non-singletons are deleted) and then the addition of $A \cup B$. 

\begin{figure}[ht]
\centering
\subcaptionbox{Up-move without singletons}{ \includegraphics[width = 0.84\textwidth]{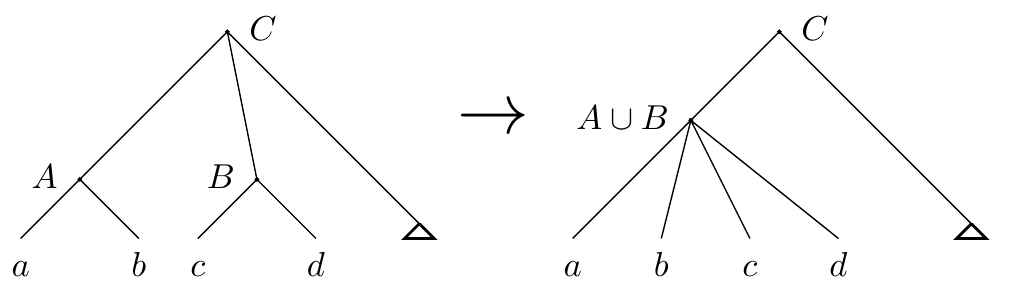}}
\par\bigskip
\subcaptionbox{Up-move with a singleton}{ \includegraphics[width = 0.84\textwidth]{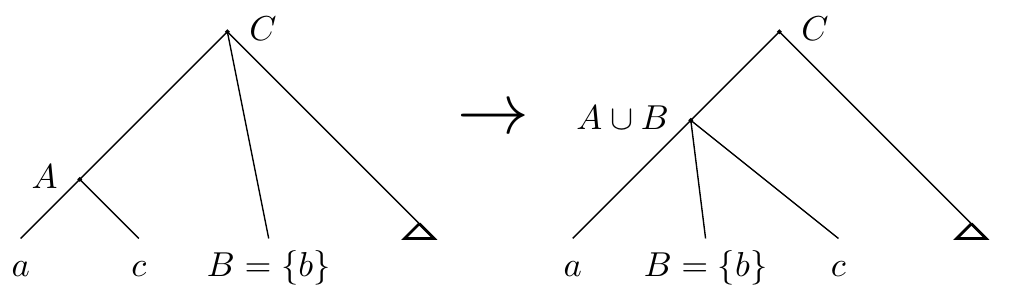}}
\caption{Examples of up-moves. The up-moves in (A) show one example without singleton clusters, and in (B) one in which one of the clusters is a singleton (it is also possible for both to be singletons). In all cases, a bold triangle indicates a non-singleton cluster.}
\label{f:upmoves}
\end{figure}

A down-move is therefore the reverse of this. To be precise, we select some cluster $Z \in H(T)$ with distinct inclusion-maximal clusters $Y_1,\dots,Y_k$. We then partition these inclusion-maximal clusters into two, to form (after relabelling) $\bigcup_{i=1}^j Y_i$ and $\bigcup_{i=j+1}^k Y_i$, under the restriction that each union can only contain one element if that element is a singleton. Then, we add the clusters from $\{\bigcup_{i=1}^j Y_i, \bigcup_{i=j+1}^k Y_i\}$ that are not singletons, and delete $Z$.

\begin{figure}[ht]
\subcaptionbox{Down-move with unions of multiple clusters}{ \includegraphics[width = 0.84\textwidth]{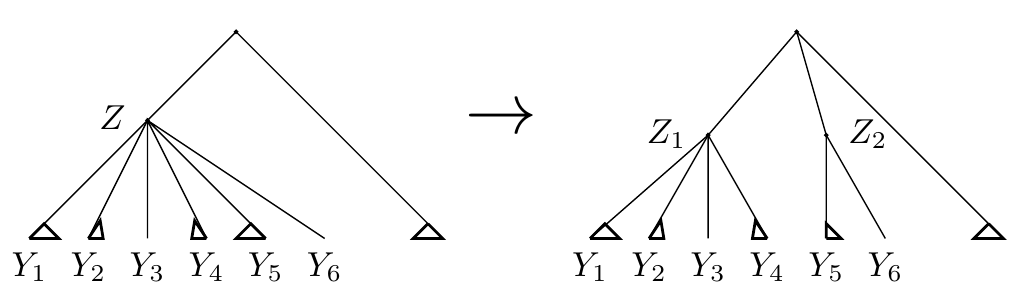}}
\par\bigskip
\subcaptionbox{Down-move with a union and a singleton}{ \includegraphics[width = 0.84\textwidth]{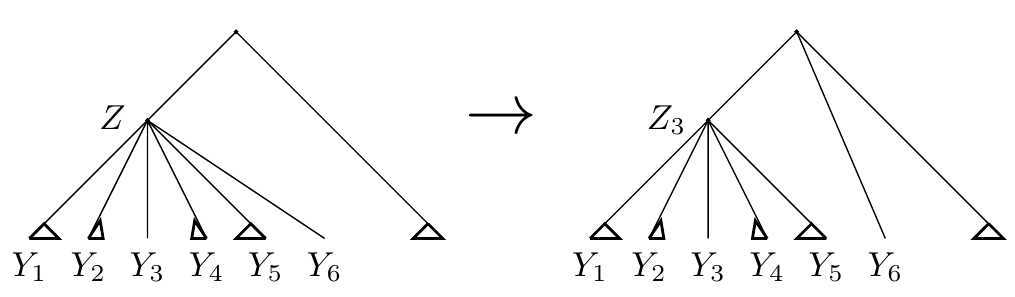}}
\caption{Examples of down-moves. The down-moves in (A) show one example in which each union contains more than one cluster, and in (B) one in which one union is just a single cluster, in which case it must be a singleton (here $Y_6$). In all cases, a bold triangle indicates a non-singleton cluster.}
\label{f:downmoves}
\end{figure}

For a tree $T$, recall that $P(T)$ is the set of proper clusters of the hierarchy corresponding to $T$, and let 

\[
f(T) = \left( \displaystyle\sum_{A \in P(T)} |A| \right) - |P(T)|= \displaystyle\sum_{A \in P(T)} \left( |A| -1 \right), 
\]
noting that this number will always be non-negative, and will only be zero if $T$ is the star tree, in which case $P(T)=\emptyset$.

We call $f(T)$ the \emph{rank} of $T$. The rank provides an easy shortcut to calculating the distance between certain trees, if one is above the other in $\HH(X)$:

\begin{theorem}
\label{t:verticaldistance}
If $T,T' \in RP(X)$, with $T \le_\HP T'$, then 
\[
d_\HP(T,T')= f(T') - f(T).
\]
\end{theorem}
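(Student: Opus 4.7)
The plan is to combine two observations: every edge of $\HH(X)$ changes the rank $f$ by exactly $\pm 1$, and whenever $T<_\HP T'$ there is an up-move from $T$ to some $T''\le_\HP T'$. For the first observation, an up-move $T\to T_{A\cup B}^V$ deletes the non-singleton ones among $A,B$ from $H(T)$ and adds the new cluster $A\cup B$. Using disjointness $|A\cup B|=|A|+|B|$, a short case analysis on whether $A$ or $B$ is a singleton shows the rank change is $+1$ in every case. By Theorem~\ref{OneMove} every edge of $\HH(X)$ is an up-move or its reverse, so every edge changes $f$ by $\pm 1$. Consequently any walk from $T$ to $T'$ of length $L$, with $u$ up-moves and $d$ down-moves, satisfies $L=u+d$ and $u-d=f(T')-f(T)$, giving $L\ge|f(T')-f(T)|$ and the lower bound.

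For the second observation, let $\delta\colon H(T)\to H(T')$ be the maximal hierarchy-preserving map of Theorem~\ref{t:MaximalMap} and choose a cluster $C\in H(T)\cap H(T')$ such that $\delta$ acts as the identity on the maximal vertical subhierarchy of $C$ but the inclusion-maximal subclusters of $C$ differ between $H(T)$ and $H(T')$; such a $C$ exists by a top-down argument starting from $X$, since $T\ne T'$. Let $A_1,\dots,A_l$ and $B_1',\dots,B_k'$ be the inclusion-maximal subclusters of $C$ in $H(T)$ and $H(T')$ respectively (each family partitioning $C$). By subset-preservation applied to $A_h\subsetneq C$ with $\delta(C)=C$, each $\delta(A_h)$ is strictly contained in $C$ and so lies in a unique $B_{\sigma(h)}'$, giving $A_h\subseteq B_{\sigma(h)}'$. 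If $\sigma$ were a bijection, then $\sum_h|A_h|=|C|=\sum_i|B_i'|$ combined with $|A_h|\le|B_{\sigma(h)}'|$ would force $A_h=B_{\sigma(h)}'$, contradicting the fact that the child-sets differ. Hence some $B_{i_0}'$ contains two distinct $A_p,A_q$, so $A_p\cup A_q\subseteq B_{i_0}'\subsetneq C$. Set $T'':=T_{A_p\cup A_q}^C$ (a valid binding) and define $\delta''\colon H(T'')\to H(T')$ by $\delta''(A_p\cup A_q)=B_{i_0}'$ and $\delta''=\delta$ elsewhere; verifying that $\delta''$ is hierarchy-preserving gives $T''\le_\HP T'$. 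Iterating the construction, $f$ strictly increases and is bounded above, so the process terminates; by the contrapositive of the construction it must terminate at $T'$, yielding an up-path from $T$ to $T'$ of length $f(T')-f(T)$, and hence $d_\HP(T,T')\le f(T')-f(T)$.

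The main obstacle is this existence lemma, and specifically the verification that $\delta''$ is hierarchy-preserving. Enveloping follows from the enveloping of $\delta$ and $A_p\cup A_q\subseteq B_{i_0}'$. Subset-preservation splits by whether $Y$ or $Z$ equals $A_p\cup A_q$: the nontrivial case is $Z=A_p\cup A_q$ with $Y\subsetneq Z$, where the hierarchy structure of $T$ and the inclusion-maximality of $A_p,A_q$ in $C$ force $Y\subseteq A_p$ or $Y\subseteq A_q$. The most delicate subcase is $Y=A_p$ with $A_p$ a singleton retained in $H(T'')$; here the strict inclusion $A_p\subsetneq B_{i_0}'$ (a consequence of $A_q$ being nonempty and disjoint from $A_p$) ensures $\delta''(Y)=A_p\subsetneq B_{i_0}'=\delta''(Z)$. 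This technical analysis parallels the arguments in Lemma~\ref{UnionHP} and Theorem~\ref{t:Union2OfMore}.
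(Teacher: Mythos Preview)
Your proof is correct and takes a genuinely different route from the paper for the upper bound (the lower bound is handled identically in both, via the observation that every edge of $\HH(X)$ changes $f$ by exactly $\pm 1$). For the upper bound, the paper builds an explicit path: taking the maximal map $\delta$, it processes each target cluster $A\in P(T')$ in turn, binding together the clusters in $\delta^{-1}(A)$ and then absorbing the remaining singletons of $A$ one at a time, and counts the total number of moves as $\sum_{A\in P(T')}\bigl(c_A+|A|-|\bigcup_{B\in\delta^{-1}(A)}B|-1\bigr)$, which it algebraically simplifies to $f(T')-f(T)$. You instead prove the structural lemma that whenever $T<_\HP T'$ there is a single up-move from $T$ to some $T''\le_\HP T'$, and iterate; since $f$ increases by one at each step and the process halts only when the current tree equals $T'$, the resulting up-path has length exactly $f(T')-f(T)$. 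Your argument is cleaner and more conceptual---it avoids the bookkeeping of the paper's count and in effect proves directly that $(\RP(X),\le_\HP)$ is graded with rank function $f$, anticipating Theorem~\ref{t:fbounds}---while the paper's version has the advantage of giving an explicit description of a geodesic. One small remark: your inference ``$\sigma$ is not a bijection, hence some $B_{i_0}'$ contains two distinct $A_p,A_q$'' tacitly uses that $\sigma$ is automatically surjective (since the $A_h$ partition $C$ and each $A_h$ lies in some $B_i'$, every $B_i'$ is a union of $A_h$'s), so that non-bijective is equivalent to non-injective; this is immediate but worth making explicit.
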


\begin{proof}
Recall that an up-move corresponds to taking the union of two distinct clusters $A,B$ that are inclusion-maximal in some cluster $C$ and deleting $A$ if $|A| >1$ and deleting $B$ if $|B| >1$. 

Let $T,T' \in RP(X)$ and $\delta: H(T) \ra H(T')$ a maximal hierarchy-preserving map between them. For $A \in H(T')$, let $\delta^{-1}(A)$ denote the set of clusters that map to $A$, and let $c_A:=|\delta^{-1}(A)|$. 
We can see that for each cluster $A \in H(T')$ for which $c_A > 1$, we can bind the clusters in $\delta^{-1}(A)$ to form $\bigcup_{B \in \delta^{-1}(A)} B$, which will take $c_A -1$ moves.

As $\delta$ is maximal, all elements of $\delta^{-1}(A)$ are inclusion-maximal in some cluster $C$. We will then need to bind each singleton element of $A \setminus \bigcup_{B \in \delta^{-1}(A)} B$ with $B$, which will take $|A|-\left| \bigcup_{B \in \delta^{-1}(A)} B\right|$ moves (which will again always form a tree due to maximality of $\delta$)

It follows that it takes 

\[
\left( c_A + |A|-\left| \bigcup_{B \in \delta^{-1}(A)} B\right| - 1 \right)
\]
moves to obtain $A$ using this method. 

If $c_A = 0$, then we can form a subcluster of size $2$ of $A$, then add the remaining elements of $A$ one at a time, which will require $|A|-1$ moves. Observe that in this case $c_A=0$ and $|\displaystyle\bigcup_{B \in \delta^{-1}(A)} B|=0$. so 

\[
\left( c_A + |A|-| \displaystyle\bigcup_{B \in \delta^{-1}(A)} B| - 1 \right) = |A| -1.
\]

It follows that using this method (starting with inclusion-maximal proper clusters of $H(T)$ and working our way down, so that we will always have a valid tree), it will take

\[
\displaystyle\sum_{A \in P(T')}\left( c_A + |A|-| \displaystyle\bigcup_{B \in \delta^{-1}(A)} B| - 1 \right)
\]
\[
= -|P(T')| + \displaystyle\sum_{A \in P(T')}\left( c_A + |A|-| \displaystyle\bigcup_{B \in \delta^{-1}(A)} B| \right)
\]
\[
= |P(T)|-|P(T')| + \displaystyle\sum_{A \in P(T')}\left(|A|-| \displaystyle\bigcup_{B \in \delta^{-1}(A)} B| \right)
\]

\[
= \left( \displaystyle\sum_{A \in P(T')} |A| \right) - \left( \displaystyle\sum_{A \in P(T)} |A| \right) + |P(T)|-|P(T')|
\]

\[
= f(T') - f(T).
\]

Therefore, $d_\HP(T,T') \le f(T') - f(T)$.

We now observe that, by Theorem \ref{OneMove}, each binding can only increase or decrease the rank by $1$. Hence there is a lower bound on $d_\HP(T,T')$ of the difference between their ranks, so $d_\HP(T,T') = f(T') - f(T)$.
\end{proof}

\begin{corollary}
\label{DistanceEstimate}
If $T, T' \in RP(X)$, then 
\[|f(T)- f(T')| \le d_\HP(T,T') \le f(T)+f(T').\]
\end{corollary}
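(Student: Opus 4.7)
The plan is to derive both bounds by combining Theorem \ref{t:verticaldistance} with elementary properties of the metric $d_\HP$ and the rank function $f$. The upper bound comes from the triangle inequality applied at the star tree, and the lower bound comes from the observation that a single move in $\HH(X)$ changes the rank by exactly one.

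For the upper bound, I would use the star tree $S$ as an intermediate waypoint. Since $S$ is the $\le_\HP$-minimum (as noted just before Proposition \ref{p:max.elts.RPX}), we have $S \le_\HP T$ and $S \le_\HP T'$. Moreover $f(S) = 0$ because $P(S) = \varnothing$. Applying Theorem \ref{t:verticaldistance} to each of these comparable pairs gives
\[
d_\HP(S,T) = f(T) - f(S) = f(T), \qquad d_\HP(S,T') = f(T').
\]
The triangle inequality for $d_\HP$ then yields $d_\HP(T,T') \le d_\HP(T,S) + d_\HP(S,T') = f(T) + f(T')$.

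For the lower bound, the key observation (already used at the end of the proof of Theorem \ref{t:verticaldistance}) is that a single edge in $\HH(X)$ corresponds, by Theorem \ref{OneMove}, to a binding or unbinding, and such a move alters the rank by exactly $1$. Consequently, if $T = T_0, T_1, \dots, T_n = T'$ is a geodesic path in $\HH(X)$ of length $n = d_\HP(T,T')$, then
\[
|f(T) - f(T')| = \Bigl|\sum_{i=1}^{n} \bigl(f(T_i) - f(T_{i-1})\bigr)\Bigr| \le \sum_{i=1}^{n} |f(T_i) - f(T_{i-1})| = n = d_\HP(T,T').
\]
Combining these two inequalities gives the statement.

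The hard part is essentially nonexistent here: both bounds are immediate consequences of Theorem \ref{t:verticaldistance} together with the fact that $S$ is the minimum of the poset and each move changes the rank by $\pm 1$. The only thing to double-check is that the rank-change property really does hold for \emph{every} edge (both up-moves and down-moves), which follows symmetrically from Theorem \ref{OneMove} since a down-move is simply the reverse of an up-move.
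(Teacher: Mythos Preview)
Your proof is correct and follows essentially the same route as the paper: the upper bound comes from routing through the star tree $S$ (with $f(S)=0$) and applying Theorem~\ref{t:verticaldistance}, and the lower bound comes from the rank-changes-by-one observation already established at the end of the proof of Theorem~\ref{t:verticaldistance}. Your write-up is slightly more explicit about the lower bound (spelling out the telescoping sum along a geodesic), but the substance is identical.
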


\begin{proof}
 That $|f(T)- f(T')| \le d_\HP(T,T')$ follows immediately from Theorem \ref{t:verticaldistance}. To see that $d_\HP(T,T') \le f(T)+f(T')$, observe that one can always get from $T$ to $T'$ by taking a path of down-moves to the star tree, then a path of up-moves to $T'$. Hence for any $T,T' \in RP(X)$, by Theorem \ref{t:verticaldistance} we have $d_\HP(T,T') \le f(T)+f(T')-2f(S) = f(T)+f(T')$. 
\end{proof}

We now derive some results on the diameter and neighbourhood of $RP(X)$ under $d_\HP$.

\begin{theorem}
\label{t:fbounds}
If $|X|=n$ and $T \in RP(X)$, then $0 \le f(T) \le \frac{(n-1)(n-2)}{2}$, with bounds tight and every integer value achieved by some tree in $RP(X)$. Equivalently, if $|X|=n$, $\mathcal{H}(X)$ is a graded poset with rank function $f$ and maximum rank $ \frac{(n-1)(n-2)}{2}$.
\end{theorem}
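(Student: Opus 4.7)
The plan is to prove the four assertions—lower bound, upper bound, density of integer ranks, and gradedness—by leveraging the rank-distance identity from Theorem~\ref{t:verticaldistance} and the structure of $\le_\HP$-up-moves from Theorem~\ref{OneMove}.

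For the lower bound, each proper cluster $A$ has $|A|\ge 2$, so every summand of $f(T)=\sum_{A\in P(T)}(|A|-1)$ is nonnegative, yielding $f(T)\ge 0$; equality holds iff $P(T)=\emptyset$, i.e., $T$ is the star tree. For the upper bound, I would first reduce to binary trees: any $T\in RP(X)$ has a refinement to some binary $T^*$, and since refinement refines $\le_\HP$ (by the Proposition preceding Section~3), we get $T\le_\HP T^*$, so $f(T)\le f(T^*)$ by Theorem~\ref{t:verticaldistance}. It then suffices to bound $f$ on $BRP(X)$, which I would do by induction on $n=|X|$. For binary $T$, write the root's two child-clusters as $A,B$ with sizes $a,b$ and $a+b=n$. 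If $a=1$, then $P(T)=\{B\}\cup P(T_B)$, so $f(T)=(n-2)+f(T_B)\le (n-2)+\binom{n-2}{2}=\binom{n-1}{2}$ by induction, with equality when $T_B$ is a caterpillar. If $a,b\ge 2$, then $f(T)=(n-2)+f(T_A)+f(T_B)\le (n-2)+\binom{a-1}{2}+\binom{b-1}{2}$, and the convexity of $k\mapsto\binom{k-1}{2}$ makes the right-hand side maximised at $a=2$, $b=n-2$, giving $(n-2)+\binom{n-3}{2}=\binom{n-1}{2}-(n-3)<\binom{n-1}{2}$ for $n\ge 4$. Tightness is witnessed by the caterpillar tree whose proper clusters have sizes $2,3,\dots,n-1$, for which $f=\sum_{k=1}^{n-2}k=\tfrac{(n-1)(n-2)}{2}$.

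For achievability of every integer in $[0,\tfrac{(n-1)(n-2)}{2}]$, fix any saturated chain $S=T_0<_\HP T_1<_\HP\cdots<_\HP T_N$ from the star tree to the caterpillar, where $N=\tfrac{(n-1)(n-2)}{2}$. By Theorem~\ref{OneMove}, each step $T_i\to T_{i+1}$ is a binding move, and a direct calculation (replacing $A,B$ by $A\cup B$, with singletons omitted from $P$) shows such a move changes $f$ by exactly $1$ in all three sub-cases (two non-singletons, one singleton, two singletons). Hence $f(T_i)=i$, so every integer in the interval is realised. Finally, the graded poset conclusion follows immediately: Theorem~\ref{t:verticaldistance} shows that every saturated chain from $S$ to $T$ has length $f(T)$, so $f$ is a rank function on $(RP(X),\le_\HP)$; the maximum rank is $\max_T f(T)=\tfrac{(n-1)(n-2)}{2}$ as just established.

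The hard part will be the upper bound on $BRP(X)$; the induction above works, but one must be careful with the case split and the convexity inequality. An alternative route rewrites $f(T)=\sum_{x\in X}d_T(x)-2n+2$ for binary $T$, where $d_T(x)$ is the depth of the leaf $x$, and appeals to the classical exchange-argument fact that the caterpillar maximises total leaf-depth among binary trees on $n$ leaves.
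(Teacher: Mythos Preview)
Your proof is correct, and for the lower bound, achievability of all intermediate values, and gradedness it runs essentially parallel to the paper's argument (both rely on the fact that a single binding increases $f$ by exactly $1$, which you verify explicitly and the paper appeals to via Theorem~\ref{OneMove}).

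The genuine difference lies in the upper bound. The paper argues by an exchange/swap argument directly on binary trees: if some cluster $C$ of size $k$ fails to have a subcluster of size $k-1$, its two maximal subclusters $B_1,B_2$ can be replaced by $B_1\cup B_2$ together with a singleton, increasing $f$; iterating forces the caterpillar shape. This is intuitive but, as written in the paper, somewhat informal (the displayed inequality $|B_1|-1+|B_2|-1\le k-3$ is in fact off by one, since $|B_1|+|B_2|=k$). Your induction on $n$, splitting on the sizes $a,b$ of the root's two children and using the convexity of $k\mapsto\binom{k-1}{2}$ to push the maximum to the boundary $a=1$, is a cleaner and fully rigorous route to the same conclusion; it also makes the strict inequality in the balanced case $a,b\ge 2$ transparent. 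Your alternative observation $f(T)=\sum_{x\in X} d_T(x)-2n+2$ for binary $T$ is correct and gives yet another path via the classical fact that the caterpillar maximises total leaf depth; this is not in the paper and is a nice addition.
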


\begin{proof}
By Theorem \ref{t:verticaldistance} if $T <_\HP T'$, then $f(T) < f(T')$. Also, by Theorem \ref{OneMove} the function $f$ is compatible with the covering relation, so $\mathcal{H}(X)$ is a graded poset with rank function $f$.

Minimal $f$ is achieved by the star tree $S$ (as down-moves decrease $f$), which has $f(S)=0$.

Elements with maximal $f$ must be binary trees, because they are maximal in the poset and up-moves increase $f$. For all binary trees, $|H(T)|=2n-1$. We claim that caterpillar trees have maximal $f$, and we know for any caterpillar tree $C$, $f(C) = \frac{(n-1)(n-2)}{2}$. To see that caterpillar trees have maximal $f$, suppose you have some cluster $C$ of size $k$ that does not have a subcluster of size $k-1$. Observe that the `contribution' to $f$ of a cluster is strictly bounded above by the contribution of the cluster that contains it. There has to be at most two inclusion-maximal subclusters or we could make a binding, so call them $B_1,B_2$. Then the sum of the sizes of subclusters of $B_1$ has to be be at most $|B_1|-1+|B_2|-1 \le k-3$. But we could replace $B_1$ and $B_2$ by $B_1 \cup B_2$ plus one other element, without changing any of the structure below, and that has size $k-2$. The claim follows.

Hence the maximum value of $f(T) = \frac{n^2+3n-2}{2} - (2n-1) = \frac{(n-1)(n-2)}{2}$.

We can then observe that as we take any shortest undirected path from $S$ to a caterpillar tree, the value of $f(T)$ increases by $1$ each time.
\end{proof}

\begin{corollary}
If $|X|=n$ and the diameter of $RP(X)$ under $\le_\HP$ is $\Delta_\HP$, then 
\[
\frac{(n-1)(n-2)}{2} + 1 \le \Delta_\HP \le (n-1)(n-2).
\]
In particular, the diameter is $O(n^2)$.
\end{corollary}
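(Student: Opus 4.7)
The plan is to prove the two inequalities separately. For the upper bound, I would directly combine Corollary \ref{DistanceEstimate} with the rank bound from Theorem \ref{t:fbounds}: for any trees $T,T' \in RP(X)$,
\[
d_\HP(T,T') \le f(T)+f(T') \le 2\cdot\frac{(n-1)(n-2)}{2} = (n-1)(n-2),
\]
giving $\Delta_\HP \le (n-1)(n-2)$.

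For the lower bound I would exhibit a specific pair realising distance at least $M+1$, where $M=(n-1)(n-2)/2$; the cases $n\le 2$ are vacuous, so assume $n\ge 3$ and label $X=\{x_1,\ldots,x_n\}$. Let $T$ be the caterpillar with leaf ordering $x_1,\ldots,x_n$, whose proper clusters are $\{x_1,x_2\},\{x_1,x_2,x_3\},\ldots,\{x_1,\ldots,x_{n-1}\}$, so $f(T)=M$ by Theorem \ref{t:fbounds}. Let $T'$ be the tree with the single proper cluster $\{x_{n-1},x_n\}$, so $f(T')=1$.

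The key observation is that $T'\not\le_\HP T$: any hierarchy-preserving map $\delta\colon H(T')\to H(T)$ would have to send the cluster $\{x_{n-1},x_n\}$ to a cluster of $T$ containing it, but every proper cluster of $T$ omits $x_n$, leaving only $X$ as a candidate; since $\delta(X)=X$ as well, this would violate the subset-preserving property. By Theorem \ref{OneMove}, every edge of the Hasse diagram $\HH(X)$ changes the rank $f$ by exactly $\pm 1$, so the length $d$ of any path between $T$ and $T'$ satisfies $d\equiv M-1 \pmod 2$ and $d\ge M-1$. If $d=M-1$, every step must be a down-move, producing a monotone chain $T=T_0\ge_\HP T_1\ge_\HP\cdots\ge_\HP T_{M-1}=T'$ that forces $T'\le_\HP T$ by transitivity, contradicting the previous sentence. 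The next value with the correct parity is $M+1$, so $d_\HP(T,T')\ge M+1$ and hence $\Delta_\HP\ge M+1$.

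I expect the only real obstacle to be marrying these two ingredients cleanly: the non-existence of a hierarchy-preserving map $T'\to T$, and the parity restriction that edges only change rank by $\pm 1$. Both are one-line consequences of earlier results, and together they force the jump from the excluded value $M-1$ straight to $M+1$. No further calculation is needed beyond Theorem \ref{OneMove}, Corollary \ref{DistanceEstimate}, and the rank bound of Theorem \ref{t:fbounds}.
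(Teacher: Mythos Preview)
Your proof is correct and follows essentially the same strategy as the paper: the upper bound is obtained identically (Corollary~\ref{DistanceEstimate} plus the rank bound of Theorem~\ref{t:fbounds}), and the lower bound uses the same pair of trees, a caterpillar of maximal rank together with a tree whose single proper cluster contains the leaf missing from the caterpillar's top proper cluster. The only difference is in how you justify $d_\HP(T,T')\ge M+1$: the paper asserts that the geodesic must pass through the star tree, whereas you use the parity constraint (each edge changes rank by $\pm 1$, so path lengths are congruent to $M-1$ modulo $2$) to rule out both $M-1$ and $M$. Your argument is arguably tighter, since the paper's ``must go through the star tree'' is stated without further justification, while your parity-plus-monotone-chain step makes the exclusion of $M-1$ explicit.
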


\begin{proof}
As previously observed, one can always get from $T$ to $T'$ by a sequence of down-moves to the star tree, then up-moves to $T'$. Hence for any $T,T' \in RP(X)$, by Theorem \ref{t:verticaldistance} we have $d_\HP(T,T') \le f(T)+f(T')-2f(S)$. It follows by Theorem \ref{t:fbounds} that $\Delta_\HP \le (n-1)(n-2)$. 

We can also observe that for any caterpillar tree $C$ with inclusion-maximal proper cluster $X \backslash \{a\}$, any tree $T$ with a single proper cluster $ab$ for some leaf $b$ does not have a hierarchy-preserving map into $C$, and hence a shortest path from $C$ to $T$ must go from $C$ to the star tree to $T$, for a distance of $d(C,T) = f(T) - f(S) +1= \frac{(n-1)(n-2)}{2} +1$. Therefore $\frac{(n-1)(n-2)}{2} + 1 \le \Delta_\HP$ and the corollary holds.

Note that at least the upper bound can certainly be improved on, since no shortest path between a pair of binary trees with more than $3$ leaves includes the star tree.
\end{proof}

The size of the up-neighbourhood and down-neighbourhood of a given tree varies with the structure of the tree. We now investigate the maximum sizes of these neighbourhoods.

\begin{theorem}
Let $T \in RP(X)$, where $|X|=n$. Then the up-neighbourhood of $T$ contains at most $\frac{n(n-1)}{2}$ trees, with this value achieved only by the star tree.
\end{theorem}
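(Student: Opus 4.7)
The plan is to apply Theorem \ref{OneMove} to parametrise the up-neighbours of $T$ explicitly, and then bound the count by a least-common-ancestor argument. By Theorem \ref{OneMove}, every up-neighbour of $T$ is a binding $T_{A \cup B}^V$ for some cluster $V \in H(T)$ and distinct inclusion-maximal subclusters $A,B$ of $V$ with $A \cup B \subsetneq V$. Since the inclusion-maximal subclusters of $V$ partition $V$, the condition $A \cup B \subsetneq V$ forces $V$ to have at least three such subclusters. A short check (using that distinct bindings produce distinct new clusters) shows that distinct pairs $(V, \{A, B\})$ yield distinct bindings, so, writing $k_V$ for the number of inclusion-maximal subclusters of $V$, the size of the up-neighbourhood of $T$ equals
\[
\sum_{V \in H(T),\; k_V \ge 3} \binom{k_V}{2}.
\]

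Next I would exploit the least-common-ancestor correspondence. For each pair $\{x,y\}\subseteq X$ of distinct leaves, let $V = \mathrm{lca}(x,y)$ and let $A,B$ be the distinct inclusion-maximal subclusters of $V$ containing $x$ and $y$ respectively; this defines a map $\phi$ from the set of two-element subsets of $X$ to the set of tuples $(V, \{A, B\})$. The two key observations are: (i) restricting $\phi$ to $\{\{x,y\} : k_{\mathrm{lca}(x,y)} \ge 3\}$ surjects onto the set of up-moves (choose any $x \in A$, $y \in B$ to hit $(V,\{A,B\})$); and (ii) the preimage of any up-move $(V, \{A, B\})$ has size $|A|\cdot|B| \ge 1$. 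Together these yield
\[
\sum_{V :\, k_V \ge 3} \binom{k_V}{2} \;\le\; \binom{n}{2}.
\]

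For the equality case, both of the inequalities hidden in the chain above must be tight. Tightness of the fibre-size inequality forces $|A|=|B|=1$ for every up-move, so every child of every internal vertex with at least three children is a leaf. Tightness of the domain-restriction forces every pair of leaves to have an lca with at least three children; since every internal vertex is the lca of some pair of its leaf descendants, this forces every internal vertex to have at least three children. Combining the two conditions, $T$ has a single internal vertex, namely the root with $n$ leaf children, so $T$ is the star tree. Conversely, the star tree has $k_V = n$ at its unique non-leaf cluster, giving exactly $\binom{n}{2}$ up-neighbours.

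The main obstacle is orienting the counting correctly: the natural temptation is to associate an up-move with the pairs of leaves it involves, which over-counts, whereas the clean bound comes from mapping leaf pairs to up-moves via lca, with surjectivity onto the relevant up-moves delivering the upper bound and the equality analysis following by tracing when this surjection collapses to a bijection.
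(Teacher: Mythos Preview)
Your proof is correct and takes a genuinely different route from the paper's. The paper argues by monotonicity: it shows that deleting any proper cluster from $H(T)$ strictly increases the size of the up-neighbourhood, via a case analysis on the number of children of the deleted cluster and of its parent. This forces the maximum to occur uniquely at the star tree, whose up-neighbourhood is then computed directly as $\binom{n}{2}$. Your argument instead bounds the count in one stroke via the surjection from leaf pairs to up-moves through the least common ancestor, and recovers the equality case by tracing when that surjection becomes a bijection. The paper's route yields the stronger intermediate statement that the up-neighbourhood size is strictly monotone under cluster deletion, which could be of independent interest; your route is more self-contained, avoids the case split, and makes the combinatorial source of the bound $\binom{n}{2}$ transparent (every up-move is witnessed by at least one of the $\binom{n}{2}$ leaf pairs, with equality exactly when every witness is unique and every leaf pair is a witness).
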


\begin{proof}
We will show that deleting a proper cluster from $H(T)$ will increase the size of the up-neighbourhood of $T$. It follows that the tree with the largest up-neighbourhood is the star tree $S$, and we can observe that the up-neighbourhood of $S$ consists of the trees with a single proper cluster which is size $2$ - those obtained by binding any two leaves together. As there are $n$ leaves, there are $\binom{n}{2}=\frac{n(n-1)}{2}$ in the up-neighbourhood of $S$.

We can now show that deleting a proper cluster from $H(T)$ will increase the size of the up-neighbourhood of $T$. Suppose that we have some hierarchy $H(T)$, with some cluster $C$. Let $D$ be the cluster that $C$ is inclusion-maximal in (with the possibility $D=X$). Suppose $D$ has $k$ inclusion-maximal subclusters and that $C$ has $j$ inclusion-maximal subclusters. Then, suppose that $T$ has a total of $x$ possible bindings that do not include the inclusion-maximal clusters of $C$ or $D$. Suppose first that $k=2$. Then the inclusion-maximal subclusters of $D$ cannot bind (as they would form a cluster already in $H(T)$), for a total of $x + \binom{j}{2}$ trees in the up-neighbourhood of $T$, or just $x$ if $j=2$. But if we delete $C$ to form $T'$, we now have $x + \binom{j+1}{2}$ trees in the up-neighbourhood (that is, all of the previous bindings plus all of the bindings involving the inclusion-maximal subclusters of $C$), and is larger since $j >1$.

Now suppose $k > 2$. We can then immediately see that $T$ has a total of $x + \binom{j}{2} + \binom{k}{2}$ possible bindings, or just $x + \binom{k}{2}$ if $j=2$. However, once we have deleted $C$ to form $T'$, we have $x+\binom{j+k-1}{2}$ possible bindings, which is larger, as $j > 1$.

The result follows.
\end{proof}

\begin{theorem}
Let $T \in RP(X)$, where $|X|=n$. Then the down-neighbourhood of $T$ contains at most $2^{n-2} -1$ trees, with this value achieved only by trees with a single proper cluster, and that cluster is of the form $X \backslash \{a\}$, for some leaf $a$.
\end{theorem}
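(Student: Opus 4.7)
The plan is to count the down-neighbourhood $\mathcal{D}(T)$ of $T$ by summing, over each proper cluster $Z\in P(T)$, the number $v_Z$ of valid unordered partitions of the inclusion-maximal subclusters of $Z$ into two non-empty blocks (subject to the singleton-block restriction from the down-move description). Each down-neighbour of $T$ is uniquely determined by such a pair $(Z,\{P_1,P_2\})$, because the symmetric difference of the two hierarchies recovers both the removed cluster and the added unions; hence $|\mathcal{D}(T)|=\sum_{Z\in P(T)} v_Z$.

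First I would bound each $v_Z$ locally. The inclusion-maximal subclusters partition $Z$ into non-empty sets, so the number $k_Z$ of these subclusters satisfies $k_Z\le |Z|$, with equality exactly when every inclusion-maximal subcluster is a singleton. The total number of unordered partitions of $k_Z$ objects into two non-empty blocks is $2^{k_Z-1}-1$, so $v_Z\le 2^{k_Z-1}-1\le 2^{|Z|-1}-1$, and the rightmost bound is tight only when every inclusion-maximal subcluster of $Z$ is a singleton.

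Next I would exploit the recursive structure of $T$. Let the root have children $V_1,\ldots,V_\ell$ with $V_1,\ldots,V_p$ the non-singleton (``big'') ones and $m_i=|V_i|$. Since every proper cluster of $T$ sits under a unique big child, defining $M(T'):=v_{\mathrm{root}}(T')+|\mathcal{D}(T')|$ for a tree $T'$ gives the recursion $|\mathcal{D}(T)|=\sum_{i=1}^p M(T|_{V_i})$. The central ingredient is a lemma, proved by induction on leaf count: $M(T')\le 2^{|X'|-1}-1$ for every $T'\in RP(X')$, with equality iff $T'$ is the star tree. The inductive step applies the same recursion to $T'$ and combines it with the elementary inequality $\sum_j 2^{a_j}\le 2^{\sum_j a_j}$ for positive integer exponents; tracking this sum-of-powers bound carefully against the root contribution is what I expect to be the main technical obstacle, with a minor edge case to handle when the root has exactly two children, both non-singletons.

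Finally I would conclude by cases on $p$. If $p=0$, $T$ is the star tree and $|\mathcal{D}(T)|=0<2^{n-2}-1$ for $n\ge 3$. If $p=1$, $|\mathcal{D}(T)|=M(T|_{V_1})\le 2^{m_1-1}-1$; since the root has at least two children, $m_1\le n-1$, so $|\mathcal{D}(T)|\le 2^{n-2}-1$, with equality forcing both $m_1=n-1$ and (by the lemma) that $T|_{V_1}$ is the star tree on $V_1=X\setminus\{a\}$---exactly the characterisation in the theorem. If $p\ge 2$, the lemma gives $|\mathcal{D}(T)|\le\sum_i(2^{m_i-1}-1)$, and the sum-of-powers bound yields $\sum_i 2^{m_i-1}\le 2^{\sum_i(m_i-1)}\le 2^{n-p}$, so $|\mathcal{D}(T)|\le 2^{n-p}-p\le 2^{n-2}-2<2^{n-2}-1$.
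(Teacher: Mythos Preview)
Your proposal is correct, and it takes a genuinely different route from the paper. The paper argues by a monotonicity step: it shows that whenever a proper cluster $C$ sits inside another proper cluster $D$, deleting $C$ from the hierarchy strictly increases the size of the down-neighbourhood. This reduces the problem to trees whose proper clusters are pairwise disjoint (each a ``flat'' cluster attached directly under the root), for which the down-neighbourhood is exactly $\sum_j \stirling{i_j}{2}=\sum_j(2^{i_j-1}-1)$; an appeal to the superadditivity of $m\mapsto 2^{m-1}-1$ then shows the sum is maximised by a single cluster of size $n-1$.

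Your approach instead decomposes recursively through the root and packages the root contribution into the auxiliary quantity $M(T')=v_{\mathrm{root}}(T')+|\mathcal D(T')|$, which satisfies the clean recursion $|\mathcal D(T)|=\sum_i M(T|_{V_i})$. Your key lemma $M(T')\le 2^{|X'|-1}-1$ (equality iff star) is in fact equivalent to the theorem itself applied to the tree obtained from $T'$ by adjoining a new leaf as a sibling of the root, so the induction is doing real work rather than hiding a circularity. The inductive step goes through exactly as you anticipate: for $\ell\ge 3$ one has $v_{\mathrm{root}}=2^{\ell-1}-1-p$ precisely, and combining this with the induction hypothesis and $\sum 2^{a_j}\le 2^{\sum a_j}$ gives a strict inequality whenever $p\ge 1$; the $\ell=2$ cases you flag are the only ones needing separate (easy) treatment. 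The paper's argument is shorter and avoids the case split on $(\ell,p)$, while yours is more explicitly structural and yields the equality characterisation directly from the induction without a separate Stirling-number inequality.
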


\begin{proof}
Suppose $T$ has some proper cluster $D$ with an inclusion-maximal proper subcluster $C$. Denote the inclusion-maximal subclusters of $C$ by $C_1,\dots,C_k$. Let $x$ be the number of valid unbindings of clusters that are not $C$ or $D$, $y$ be the number of valid unbindings of $D$, and $z$ the number of valid unbindings of $C$, so $T$ has a total of $x+y+z$ unbindings - that is, a down-neighbourhood of size $x+y+z$. Now, if we remove $C$ from $H(T)$, we claim that this increases the number of unbindings. This does not affect the number of valid unbindings of clusters that are not $C$ and $D$, so there are $x$ bindings of this type in $H(T) \backslash C$. Now, note that every valid unbinding of $D$ in $H(T)$ is a valid unbinding in $H(T) \backslash C$, as if $C$ is in a given partition, we can construct the same partition using the inclusion-maximal subclusters of $C$. Given that there is at least one partition here that we could not do before (deleting $D$ and replacing it by $C$ and $D \backslash C$), there are at least $y+1$ possible unbindings of $D$. We can also identify the $z$ unbindings of $C$ with $z$ unbindings of $D$ in the following way. Suppose $C$ is partitioned into $A$ and $B$ in $H(T)$. Then $D$ partitioned into $A$ and $B \cup (D \backslash C)$ is also a valid partition. It follows that there are at least $x+y+z+1$ trees in the down-neighbourhood of $H(T) \backslash C$, so the number of unbindings has been increased.

We can therefore consider only the hierarchies in which no proper cluster has a proper subcluster, that is, no proper subclusters intersect. Supposing there are $k$ proper clusters of size $i_1,\dots,i_k$ where $i_j \ge 2$ for all $j$ and $i_1+\dots+i_k \le n$, the number of splits of such a tree wil be

\[
\sum_{j=1}^k \stirling{i_j}{2}.
\]

Observe in particular that for trees with a single proper cluster, and that cluster is of the form $X \backslash \{a\}$, this becomes $\stirling{n-1}{2}$, and it follows from basic properties of the Stirling numbers of the second kind that 

\[
\sum_{j=1}^k \stirling{i_j}{2} \le \stirling{n-1}{2}.
\]

Hence trees of the form described have the largest possible number of splits, $\stirling{n-1}{2} = 2^{n-2} -1$, and the result follows.
\end{proof}

\begin{corollary}
The maximum neighbourhood size of a tree $T$ (the sum of the up- and down-neighbourhoods of $T$) is $O(2^{n-2})$.
\end{corollary}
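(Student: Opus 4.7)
The plan is to chain the two immediately preceding theorems, since the total neighbourhood size of a tree $T$ is by definition the sum of its up-neighbourhood and its down-neighbourhood.

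First, I would invoke the theorem bounding the up-neighbourhood of any $T \in RP(X)$ by $\tfrac{n(n-1)}{2}$, and the theorem bounding the down-neighbourhood by $2^{n-2}-1$. Adding these gives, for any $T \in RP(X)$,
\[
|N(T)| \;\le\; \frac{n(n-1)}{2} + 2^{n-2} - 1.
\]
Note that this bound is not necessarily attained simultaneously by a single tree, but it is a valid upper bound on the sum of the two neighbourhood sizes, which is all that is needed for an asymptotic statement.

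Then I would observe that the exponential term dominates the quadratic: since $\tfrac{n(n-1)}{2} = o(2^{n-2})$, there exists a constant $c$ (indeed any $c > 1$ works for all sufficiently large $n$) such that $\tfrac{n(n-1)}{2} + 2^{n-2} - 1 \le c \cdot 2^{n-2}$. Hence $|N(T)| = O(2^{n-2})$, as claimed.

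There is no real obstacle here; the corollary is essentially a bookkeeping consequence of the two preceding theorems, with the only substantive observation being that the polynomial contribution from the up-neighbourhood bound is absorbed into the exponential contribution from the down-neighbourhood bound.
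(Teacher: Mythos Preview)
Your proposal is correct and matches the paper's approach: the paper states this corollary without proof, as it is an immediate consequence of the two preceding theorems bounding the up- and down-neighbourhoods, together with the observation that the polynomial up-neighbourhood bound is absorbed into the exponential down-neighbourhood bound.
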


\section{An upper bound on $d_{\HP}$}

In this section we present an algorithm for calculating an upper bound $e_\HP$ on the distance $d_\HP(T,T')$, because an exact calculation seems to be computationally expensive as the authors are yet to find an algorithm with subexponential run-time.  We will also show that the upper bound is quite often equal to the true distance (despite not being a metric itself --- see Observation~\ref{o:e.not.metric}). For instance, computational experiments show that $e_\HP=d_\HP$ in over $80\%$ of cases of pairs of trees on nine leaves (Section~\ref{s:comp.results}).

The method to find the upper bound depends on finding $\le_\HP$-maximal trees that have a hierarchy-preserving map into both $T$ and $T'$, and then finding a minimum path between $T$ and $T'$ that goes through one of these.  Of course, a geodesic path between $T$ and $T'$ need not visit any such $\le_\HP$-maximal tree, which is why this is only an upper bound. 

\begin{definition}
Let $T,T'$ be a pair of trees, and $\max_{\le_\HP}(T,T')$ be the set of trees $T_i$ in $RP(X)$ that are $\le_\HP$-maximal subject to the condition that $T_i \le_\HP T$ and  $T_i \le_\HP T'$. Then $e_\HP(T,T')$ is defined to be $\min(f(T) + f(T') - 2f(T_i))$ across all trees in $\max_{\le_\HP}(T,T')$.
\end{definition}

To find these, we will look at hierarchy-preserving maps in a different way, involving the following new definitions. 

\begin{definition}\label{d:multi-hier}
A \textit{multi-hierarchy} $\mathcal{M}$ on a set $X$ is a 
set of tuples $(A,i)$ (referred to as \textit{multi-clusters})
where $A\subseteq X$,
 and $i$ is a positive integer, 
with the following properties:

\begin{enumerate}
\item $\mathcal{M}$ contains both the tuple $(X,1)$ and all singleton tuples $(\{x\},1)$ for $x \in X$.
\item Let $(H_1,i),(H_2,j)$ be a pair of tuples in $\mathcal{M}$. Then $H_1 \cap H_2 = \varnothing$, $H_1 \subseteq H_2$ or $H_2 \subseteq H_1$.
\item The set of elements in $\mathcal{M}$ that share the same first entry $A$, say, $(A,i_1),\dots,(A,i_k)$ are numbered sequentially from $1$ to $k$ in the second entry.
\end{enumerate} 
The set of multihierarchies on a set $X$ will be denoted $\MRP(X)$. 
If $(A,i),(B,j)\in\MRP(X)$, we write $(A,i) \subseteq_M (B,j)$
if either $A \subset B$, or $A=B$ and $i \le j$.  In the latter case, if $i=j$, we write $(A,i) =_M (B,j)$.  
Define $(A,i) \subset_M (B,j)$ similarly except $i \ne j$.

Finally, if there is a multi-cluster $(A,i) \in \mathcal{M}$ where $A$ is a proper cluster on $X$, call $(A,i)$ a \textit{proper multi-cluster}.
\end{definition}

Note in particular that for any multi-hierarchy on $X$, there is a hierarchy on $X$ obtained by taking the \textit{support} of $\mathcal{M}$, denoted $supp(\mathcal{M})$ and defined by 
\[
supp(\mathcal{M})=\{A \mid (A,1) \in \mathcal{M}\}.
\] 
This is of course not a one-to-one correspondence as there can be many multi-hierarchies with the same support.

\begin{definition}\label{d:mhpm}
Let $T \in RP(X)$ and $\mathcal{M} \in \MRP(X)$. Then $\delta : H(T) \ra \mathcal{M}$ is a \textit{multi-hierarchy-preserving} map if the following properties hold for all $A,B \in H(T)$:
\begin{enumerate}
\item \textbf{Enveloping:} If $\delta(A) = (A',i)$, then $A \subseteq A'$, and
\item \textbf{Subset-Preserving:} $A \subset B$ implies that $\delta(A) \subset_M \delta(B)$.
\end{enumerate}
The set of trees with a {multi-hierarchy-preserving} map into $\mathcal{M}$ is denoted $\MHP(\mathcal{M})$.
\end{definition}

The reason for introducing these definitions is that for an algorithm to compute potential $\le_\HP$-maximal elements of $\HP(P)$, we require a systematic way of finding them. We will do this by taking certain intersections (see the algorithm below) of the clusters of $H(T)$ and $H(T')$, which unfortunately will not necessarily be a hierarchy. Observe that many of our results for hierarchy-preserving maps have an equivalent result for multi-hierarchy-preserving maps, proven in much the same way.

\begin{lemma}[Multi-hierarchy equivalent of Lemma \ref{UnionHP}]
\label{BindingMHP}
Let $T \in RP(X), \mathcal{M} \in \MRP(X)$, with a multi-hierarchy-preserving map $\delta: H(T) \ra \mathcal{M}$. Suppose $A$ and $B$ are distinct inclusion-maximal subclusters of some third distinct cluster $D$ in $H(T)$, where $D \ne A \cup B$ and $\delta(A) \subseteq_M \delta(B)$. Then the binding $T_{A\cup B}^D \in \MHP(\mathcal{M})$.
\end{lemma}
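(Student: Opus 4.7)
The plan is to mirror the construction used in the proof of Theorem \ref{t:Union2OfMore}, upgraded to the multi-hierarchy setting. First, I would invoke Lemma \ref{UnionHP} directly to conclude that the set of clusters underlying $T_{A\cup B}^D$ is a genuine hierarchy and hence corresponds to a well-defined rooted phylogenetic tree. Since $\MHP(\mathcal M)$ only requires the existence of \emph{some} multi-hierarchy-preserving map $H(T_{A\cup B}^D)\to\mathcal M$, the entire task reduces to exhibiting one.

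The natural candidate is the map $\delta':H(T_{A\cup B}^D)\to\mathcal M$ defined by
\[
\delta'(M)=\begin{cases}\delta(M), & M\ne A\cup B,\\ \delta(B), & M=A\cup B.\end{cases}
\]
This makes sense because every cluster of the binding other than $A\cup B$ is already a cluster of $H(T)$, and the assumption $\delta(A)\subseteq_M\delta(B)$ singles out $\delta(B)$ as the natural ``enveloping target'' for $A\cup B$.

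Next I would verify the two defining properties. For the \textbf{enveloping} condition, writing $\delta(A)=(A',i)$ and $\delta(B)=(B',j)$, the assumption $\delta(A)\subseteq_M\delta(B)$ forces $A'\subseteq B'$; combined with $A\subseteq A'$ and $B\subseteq B'$ from the enveloping of $\delta$, this yields $A\cup B\subseteq B'$ as required. For the \textbf{subset-preserving} condition I would split $Y\subset Z$ in $H(T_{A\cup B}^D)$ into three cases. If neither cluster equals $A\cup B$, the claim is immediate from subset-preservation of $\delta$. If $Y=A\cup B\subset Z$, the inclusion-maximality of $A,B$ in $D$ forces $D\subseteq Z$, so $\delta'(Y)=\delta(B)\subset_M\delta(D)\subseteq_M\delta(Z)=\delta'(Z)$ by successive applications of subset-preservation of $\delta$. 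If $Y\subset A\cup B=Z$, the key observation is that $A\cap B=\varnothing$ (they are distinct inclusion-maximal subclusters of $D$), so the cluster $Y\in H(T)$, being comparable in $H(T)$ with each of $A$ and $B$, must sit inside exactly one of them; the result then follows from subset-preservation of $\delta$ together with $\delta(A)\subseteq_M\delta(B)$ via transitivity of $\subset_M$ with $\subseteq_M$.

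The main delicate point will be this last case: it requires unpacking that the assumption $D\ne A\cup B$ (together with the inclusion-maximality of $A$ and $B$ inside $D$) prevents any proper cluster of $H(T)$ from straddling $A$ and $B$, so that the trichotomy ``$Y\subseteq A$ or $Y\subseteq B$'' really does hold. I would also flag the edge case in which one of $A,B$ is a singleton (hence retained in the binding), where a little extra care is needed to ensure that $\delta'(Y)\subset_M\delta'(A\cup B)$ is \emph{strict}; this ultimately comes down to composing the non-strict hypothesis $\delta(A)\subseteq_M\delta(B)$ with a strict step provided by subset-preservation of $\delta$ on the pair $Y\subsetneq A$ (or $Y\subsetneq B$) respectively, which is where the structural assumptions of the lemma pay off.
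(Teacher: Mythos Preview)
Your overall strategy mirrors the paper's intent (the paper says only that the result is proved ``in much the same way'' as Lemma~\ref{UnionHP} and Theorem~\ref{t:Union2OfMore}), and your case analysis is largely correct. However, the singleton edge case you flag is not actually resolved by the argument you sketch. If $B$ is itself a singleton, then $Y=B$ lies in $H(T_{A\cup B}^D)$ and is strictly contained in $A\cup B$, yet your map gives $\delta'(Y)=\delta(B)=\delta'(A\cup B)$: there is no ``strict step from subset-preservation on $Y\subsetneq A$ or $Y\subsetneq B$'' available, since neither strict inclusion holds. The same obstruction arises when $A$ is a singleton and $\delta(A)=\delta(B)$. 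The underlying reason is that Definition~\ref{d:mhpm}, unlike Definition~\ref{d:hpm}, does \emph{not} force $\delta$ to act as the identity on singletons, so $\delta$ may legitimately send a singleton to a large multi-cluster. A concrete witness: take $T$ the star on $X=\{a,b,c,d\}$, let $\mathcal M=\{(a,1),(b,1),(c,1),(d,1),(abc,1),(X,1)\}$, set $\delta(\{a\})=\delta(\{b\})=(abc,1)$ with the remaining values forced; then with $A=\{a\}$, $B=\{b\}$, $D=X$ your $\delta'$ violates subset-preservation at $\{b\}\subset\{a,b\}$.

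The repair is minor: additionally reset $\delta'$ on each singleton $\{x\}$ to $(\{x\},1)$, which lies in $\mathcal M$ by Definition~\ref{d:multi-hier}. Enveloping on singletons is then trivial, and for any non-singleton $Z$ containing $x$ the first component of $\delta'(Z)$ contains $Z$ (by enveloping of $\delta$, or directly when $Z=A\cup B$ since $A\cup B\subseteq B'$), hence properly contains $\{x\}$, giving $(\{x\},1)\subset_M\delta'(Z)$. Your three-case verification for non-singleton $Y$ is unaffected, and with this adjustment the proof goes through.
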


\subsection{Forming a multi-hierarchy from two trees}

Algorithm~\ref{a:makemulti} takes the hierarchies of two trees to produce a multi-hierarchy.

\begin{algorithm}[ht]
\caption{MAKEMULTI: producing a multi-hierarchy from two trees}\label{a:makemulti}
\begin{algorithmic}[1]
\Require $T,T'$ trees.
\State $\mathcal M\gets\emptyset$.
\While{$H(T)$ and $H(T')$ are non-empty}
\ForAll{maximal clusters $A_i \in H(T)$ and $B_j \in H(T')$} 
	\If{$C=A_i \cap B_j$ is non-empty} 
		\State $\mathcal{M}\gets\mathcal M\cup\{(C,k)\}$, where $k$ indicates the $k$-th occurrence of $C$
	\EndIf

	Delete all inclusion-maximal clusters of $H(T)$ and $H(T')$
\EndFor
\EndWhile
\end{algorithmic}
\end{algorithm}

We note here that as a tree has at most $2n$ clusters, the multi-hierarchy will contain at most $4n^2$ multi-clusters. In fact, this will generally not be a strict upper bound as we are only taking intersections of inclusion-maximal clusters with inclusion-maximal clusters, but it is sufficient for later showing that the algorithm has polynomial time complexity.

\begin{proposition}
The set $\mathcal{M}$ obtained from $T,T'$ using MAKEMULTI is a multi-hierarchy.
\end{proposition}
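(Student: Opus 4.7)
The strategy is to verify the three properties in Definition \ref{d:multi-hier} in turn. Property (3), sequential numbering, is immediate from the algorithm's prescription that $k$ equals the $k$-th occurrence of $C$, so I would dispose of it in a single sentence. For property (1): in the first iteration of the while loop, $X$ is the unique inclusion-maximal cluster of both $H(T)$ and $H(T')$, so the intersection $X \cap X = X$ is added as $(X,1)$. Each singleton $\{x\}$ belongs to both hierarchies and persists through the peeling process (it is never inclusion-maximal as long as a larger ancestor containing $x$ remains), so eventually $\{x\}$ first appears as an intersection of some pair of inclusion-maximal clusters and is recorded as $(\{x\},1)$.

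Property (2) is the substantive condition. I would first establish a short preliminary observation about the peeling step: in any hierarchy the inclusion-maximal clusters are pairwise disjoint, and the inclusion-maximal clusters of what remains after deletion are proper descendants (in the original hierarchy) of the deleted clusters. By induction on the iteration index, this implies that if $A$ is inclusion-maximal at iteration $t$ and $A'$ is inclusion-maximal at some iteration $t' > t$, then either $A' \subsetneq A$ or $A' \cap A = \varnothing$. Now let $(H_1,i_1)$ and $(H_2,i_2)$ be multi-clusters produced at iterations $t_1 \le t_2$ from pairs $(A_1,B_1)$ and $(A_2,B_2)$ respectively, so $H_\ell = A_\ell \cap B_\ell$. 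If $t_1 = t_2$ and the pairs differ, then $A_1 \ne A_2$ or $B_1 \ne B_2$, and in either case one of the pairs is disjoint, forcing $H_1 \cap H_2 = \varnothing$. If $t_1 < t_2$, the peeling observation yields $A_2 \subseteq A_1$ or $A_2 \cap A_1 = \varnothing$, and similarly for $B_2$ versus $B_1$; a short case check then gives $H_2 \subseteq H_1$ or $H_1 \cap H_2 = \varnothing$. In every case the required trichotomy holds.

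The main technical obstacle is the peeling observation: while intuitively clear from the picture of stripping off the top level of a hierarchy at each step, it is the pivot on which the whole comparability argument rests, and I would take care to state it as a standalone claim so that the case analysis in property (2) remains short and routine. Beyond that, I expect the verification to be straightforward bookkeeping.
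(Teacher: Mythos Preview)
Your proposal is correct and follows essentially the same approach as the paper. Both arguments reduce property~(2) to the fact that the source clusters $A_1,A_2$ (respectively $B_1,B_2$) from which two multi-clusters arise are either comparable or disjoint in their original hierarchies, with the iteration index respecting inclusion; you isolate this ``peeling observation'' as an explicit claim whereas the paper uses it tacitly, but the substance is identical.
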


\begin{proof}
It is easily seen that $\mathcal{M}$ contains $(X,1)$ and all singleton tuples. The second entry of repeated elements being sequential from $1$ to $k$ is also obvious. Hence we just have to check requirement $(2)$ of Definition~\ref{d:multi-hier}.

Let $(A,i)$ and $(B,j)$ be two multi-clusters of $\mathcal{M}$ produced by the algorithm, and suppose that $A \cap B$ is non-empty. Suppose $(A,i)$ was obtained by taking the intersection of $A_1$ and $B_1$, and that $B$ was obtained by taking the intersection of $A_2$ and $B_2$. Now, since $A \cap B$ is non-empty, it follows that $A_1$ and $A_2$ have a non-empty intersection, and similarly for $B_1$ and $B_2$. It follows that either $A_1 \subseteq A_2$ or $A_1 \supset A_2$. Without loss of generality, suppose $A_1 \subseteq A_2$. Then $A$ was obtained on either the same step as $B$, or a subsequent step. If produced on the same step, it follows that $A_1=A_2$ and $B_1=B_2$, as inclusion-maximal elements have non-empty intersection with each other. Therefore $A=B$. Otherwise, if $A$ was obtained on a subsequent step, then $A_1 \subseteq A_2$ and $B_1 \subseteq B_2$ and so $A_1 \cap B_1 \subseteq A_2 \cap B_2$, and thus $A \subseteq B$. It follows that the set of clusters in $\mathcal{M}$ is a multi-hierarchy. 
\end{proof}

As the resulting set of tuples from the algorithm is a multi-hierarchy, determination of a $\le_\HP$-maximal element of $\HP(T,T')$ can be equivalently recognised as determination of a $\le_\HP$-maximal tree in $\MHP(\mathcal{M})$, where $\mathcal{M}$ is the multi-hierarchy obtained from $T,T'$.

\begin{example}
Consider the trees $T$ and $T'$  on the set $X= \{a,b,c,d,e,f,g\}$ so that $P(T) = ab, abcde,abcdef$ and $P(T') = ab, abcde, abcdeg$. Then the proper multi-clusters of the multi-hierarchy obtained from $T,T'$ are $\{ (abcde,1),(abcde,2),(ab,1) \}$ and the proper clusters in $supp(\mathcal{M})$ are $\{abcde,ab\}$.
\end{example}

\begin{example}
Suppose $\mathcal{M}$ is obtained via the algorithm from $T,T'$ and has a support corresponding to the hierarchy $H(T)$ of the tree $T$. Then if $\mathcal{M}=\{(A,1)| A \in H(T) \}$, the unique $\le_\HP$-maximal tree in $\MHP(\mathcal{M})$ is $T$ itself, and so $e_\HP(T,T') = d_\HP(T,T)+d_\HP(T',T)= f(T)+f(T') - 2f(T)$.
\end{example}

\begin{lemma}
\label{t:KnockOne}
Let $\mathcal{M}$ be the multi-hierarchy consisting only of $\{(A,1),\dots,(A,k)\}$ for $A \ne X$ . Then, the maximum value of $f(T)$ for $T \in \MHP(\mathcal{M})$ is 

\begin{equation*}
f(T)=\begin{cases}
k|A|- \frac{k(k+3)}{2}, & \text{if $|A| > k$}\\
\frac{(|A|-1)(|A|-2)}{2}, & \text{if $|A| \le k$.}\\
\end{cases}
\end{equation*}

\end{lemma}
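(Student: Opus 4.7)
The plan is to translate the condition $T\in\MHP(\mathcal M)$ into a purely combinatorial statement about the hierarchy $H(T)$, and then maximise $f(T)$ over the resulting family of hierarchies.

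First I would determine the possible images under any multi-hierarchy-preserving map $\delta\map{H(T)}{\mathcal M}$. The root of $T$ is forced to map to $(X,1)$ by enveloping, and strict subset-preservation rules out any other cluster of $T$ sharing that image. Since a singleton multi-cluster cannot envelope a non-singleton, every non-singleton proper cluster $C$ of $T$ must satisfy $\delta(C)=(A,i)$ for some $i\in\{1,\dots,k\}$; in particular $C\subseteq A$. Subset-preservation also guarantees that along any strict inclusion-chain of such clusters the indices of their images strictly increase in $\{1,\dots,k\}$, so every chain of non-singleton proper clusters has length at most $k$.

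Next I would argue, by an exchange argument in the spirit of the ``caterpillar'' computation in Theorem \ref{t:fbounds}, that $f(T)$ is maximised when the non-singleton proper clusters of $T$ form a single totally-ordered chain strictly inside $A$. Whenever some cluster $P\subseteq A$ has two inclusion-maximal subclusters $D_1, D_2$ in $H(T)$, replacing them by the nested pair $D_1\subsetneq D_1\cup D_2$ preserves the hierarchy property (and hence corresponds to an actual tree on $X$), keeps the chain-depth bound $k$, and does not decrease $\sum_{C\in P(T)}(|C|-1)$. Iterating this merge produces a single chain. A further comparison shows that the topmost cluster in the optimal chain is a proper subset of $A$ rather than $A$ itself --- this is what yields the ``$-k$'' correction in the first formula, and is the only structural claim that is not immediate from the definitions.

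Finally I would compute the optimum directly. When $|A|>k$, the optimal chain consists of $k$ clusters of sizes $|A|-1,|A|-2,\dots,|A|-k$, giving
\[
f(T)=\sum_{i=1}^{k}\bigl((|A|-i)-1\bigr)=k|A|-\frac{k(k+1)}{2}-k=k|A|-\frac{k(k+3)}{2}.
\]
When $|A|\le k$, the chain length is capped at $|A|-1$ (we cannot nest more than $|A|-1$ subsets of size $\ge 2$ strictly inside $A$), giving sizes $|A|-1,|A|-2,\dots,2$ and
\[
f(T)=\sum_{j=2}^{|A|-1}(j-1)=\frac{(|A|-1)(|A|-2)}{2}.
\]

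The main obstacle is the exchange/merging step together with the subtle point that the topmost cluster in the optimal chain must be a strict subset of $A$. Once these structural claims are in hand, the two formulas fall out by elementary arithmetic, and the matching constructions show the bounds are attained.
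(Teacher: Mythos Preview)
Your overall strategy matches the paper's: translate $T\in\MHP(\mathcal M)$ into ``every non-singleton proper cluster lies in $A$ and the longest chain of such clusters has length at most $k$'', reduce to a single chain, and compute. The paper does the reduction in two stages (bind via Lemma~\ref{BindingMHP} to get a binary tree, then a convexity comparison on $f(T|_C)$ to rule out any cluster with two non-singleton children), whereas your single exchange move tries to accomplish both at once.

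There are two genuine gaps. First, your exchange does not ``keep the chain-depth bound $k$'': inserting $D_1\cup D_2$ strictly between $D_1$ and its old parent $P$ lengthens every maximal chain through $D_1$ by one, so a tree already saturating the bound is pushed out of $\MHP(\mathcal M)$. (The paper's binding step avoids this particular problem because it \emph{deletes} $D_1$ and $D_2$ when adding $D_1\cup D_2$, leaving chain lengths unchanged; its convexity step, however, has the same depth-increasing issue and does not address it.) Second, and more seriously, the claim that the topmost cluster in the optimal chain must be a \emph{proper} subset of $A$ is asserted without argument, and under Definition~\ref{d:mhpm} it is actually false: nothing prevents $A$ itself from being a proper cluster of $T$, mapped to $(A,k)$, with the rest of the chain mapped to $(A,k-1),\dots,(A,1)$. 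For $|A|>k$ the tree whose proper clusters have sizes $|A|,|A|-1,\dots,|A|-k+1$ then lies in $\MHP(\mathcal M)$ and has $f(T)=k|A|-\tfrac{k(k+1)}{2}>k|A|-\tfrac{k(k+3)}{2}$. So this is not a hole you can patch with the missing ``further comparison''; it points to a problem with the stated formula itself, on which the paper's own proof is equally silent.
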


\begin{proof}
Let $T \in \MHP(\mathcal{M})$. First suppose there is some cluster $C$ with more than two inclusion-maximal subclusters. Let two of them be $A,B$, and we can immediately see by Theorem~\ref{t:Union2OfMore} that $T_{A \cup B}^D \in \MHP(\mathcal{M})$ and $T \le_\HP T_{A \cup B}^D$, so $T$ is not $\le_\HP$-maximal. We can therefore assume every cluster of $T$ has at most $2$ inclusion-maximal subclusters.

Now, suppose that $C$ is an inclusion-minimal cluster of $T$ with respect to the requirement that $C$ has two inclusion-maximal clusters, neither of which is a singleton. Let the two inclusion-maximal clusters be $A$ and $B$. It follows that $f(T|_C) = \frac{(|A|-1)(|A|-2)}{2} + \frac{(|B|-1)(|B|-2)}{2}$, which is maximised if $|A|=1$ or $|B| = 1$. Therefore $T$ can only have maximal $f(T)$ if there is no non-singleton cluster that does not have a singleton subcluster. 

Therefore, the maximal possible value of $f(T)$ is achieved by mapping $A$ into $(A,1)$, then removing one element from $A$ for each mapping into $(A,2),(A,3)$, etc. The result follows.
\end{proof}

\begin{example}
If $\mathcal{M}$ is the multi-hierarchy obtained from $T,T'$, then, perhaps counterintuitively, it is not true in general that there exists a $\le_\HP$-maximal element of $\HP(T,T')$ that is a refinement of $supp(\mathcal{M})$ that has maximal $f$. Consider $\mathcal{M} = \{(abcdef,1),(abcdef,2), (abcdef,3), (ab,1),(cd,1),(ef,1) \}$. Then the maximum value of $f(T)$ is $23$ with e.g. $\{abcdef,abcde,abcd,ab,cd\}$, but the maximum value achievable with $T$ a refinement of $supp(\mathcal{M})$ is $f(T)=20$ with e.g. $\{abcdef,abcd,ab,cd,ef \}$.
\end{example}

We use Lemma \ref{t:KnockOne} as inspiration for the next algorithm, in Section~\ref{s:maxtree.alg}. In particular, that whenever MAKEMULTI produces a repeated cluster (i.e. a multi-cluster $(A,i)$ with $i>1$), we must delete one leaf from our cluster.

\subsection{Finding a $\le_\HP$-maximal tree in $\HP(T,T')$ using the multi-hierarchy of $T,T'$.}\label{s:maxtree.alg}

\begin{algorithm}[ht]
\caption{MAXTREE: an algorithm to find a maximal tree in $\HP(T,T')$ with maximal rank.}
\label{a:maxtree}
\begin{algorithmic}[1]
\Require The multi-hierarchy $\mathcal M$ obtained from $T$ and $T'$.
\State $T''\gets$ star tree.
\ForAll{$(A,i)\in\mathcal M$}
	
	Let $A'$ be the unique largest subcluster of $A$ for which $H(T'') \cup \{A'\}$ is a hierarchy.
	\If{$A'\not\in H(T'')$}
	\State $H(T'')\gets H(T'')\cup\{A'\}$.
	\ElsIf{$A'\in H(T'')$} 
		\If{$|A'|>1$}
		choose $x\in A'$
		\State $H(T'')\gets H(T'')\cup \{A'\setminus\{x\}\}$.
		\EndIf
	\EndIf
\EndFor

By iterating over all possible choices in line 6, we will find all $\le_\HP$-maximal trees in $\HP(T,T')$ (or equivalently $\MHP(\mathcal{M})$), and we take the tree with the highest rank.

\end{algorithmic}
\end{algorithm}

\begin{proposition}
The algorithmic complexity of determining the upper bound $e_\HP$ to $d_\HP(T,T')$ is polynomial.
\end{proposition}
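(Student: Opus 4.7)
The plan is to bound the running time of each of the two algorithms (MAKEMULTI and MAXTREE) separately, and then combine them.

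For MAKEMULTI, I would observe that each of $H(T)$ and $H(T')$ has $O(n)$ clusters, where $n=|X|$. The outer while loop runs $O(n)$ times since each iteration strips the current inclusion-maximal clusters of both hierarchies. Each iteration considers $O(n^2)$ pairs of inclusion-maximal clusters and computes their intersections in $O(n)$ time. This yields an $O(n^4)$ bound and confirms the $|\mathcal{M}|=O(n^2)$ estimate made in the paragraph preceding the algorithm.

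For MAXTREE, the delicate point is that ``iterating over all possible choices in line 6'' naively suggests exponential branching. However, computing $e_\HP(T,T')$ only requires the maximum value of $f(T_i)$ over $\le_\HP$-maximal trees $T_i\in\MHP(\mathcal{M})$, not the trees themselves. I would show that this maximum is insensitive to the particular leaf $x$ removed at each invocation of line 6, so a single deterministic pass through $\mathcal{M}$ suffices. The intuition comes from Lemma \ref{t:KnockOne}: in the one-cluster case, the maximum $f$ depends only on $|A|$ and the multiplicity $k$, not on which leaves are successively deleted. I would extend this observation to the general multi-hierarchy by processing $\mathcal{M}$ in an order consistent with $\subseteq_M$ (largest multi-clusters first). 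The contribution of each multi-cluster $(A,i)$ to $f$ then depends only on $|A|$, the multiplicity $i$, and previously processed sibling and descendant sizes --- quantities that are independent of the specific choices of $x$ made earlier. Processing each of the $O(n^2)$ multi-clusters takes $O(n)$ time, giving $O(n^3)$ for MAXTREE.

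The main obstacle is justifying the choice-independence carefully: one must verify that the tree hierarchies produced by any two valid choice sequences in line 6 are isomorphic as unlabelled rooted trees (so have the same rank), and that this common rank equals the maximum of $f$ over all $\le_\HP$-maximal trees in $\MHP(\mathcal{M})$. This can be approached by a top-down induction on $\subseteq_M$ using Lemma~\ref{BindingMHP} to rule out suboptimal configurations, much as in the proof of Lemma \ref{t:KnockOne}. Once this invariance is established, the overall complexity is dominated by MAKEMULTI's $O(n^4)$ step, plus an additional $O(n)$ to evaluate $f(T)+f(T')$, yielding the desired polynomial bound.
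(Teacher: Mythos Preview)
Your approach diverges from the paper's in the treatment of MAXTREE. The paper makes no attempt to prove choice-invariance; it simply observes that there are at most $4n^2$ multi-clusters, each with at most $n$ candidate leaves to remove, multiplies to obtain ``a maximum of $4n^3$ possible choices'', and declares that iterating over all of these and evaluating $f$ for each is polynomial. Your MAKEMULTI analysis is essentially the same as the paper's, just with looser constants.

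Your route for MAXTREE---arguing that any sequence of choices in line~6 yields a tree of the same maximal rank, so that one deterministic pass suffices---would, if it worked, give a stronger and more structural result than the paper's brute-force count. But as you yourself flag, the invariance claim is the crux and you have not established it. The concern is genuine: removing different leaves $x$ when processing an early multi-cluster can change which set $A'$ qualifies as ``the unique largest subcluster of $A$ for which $H(T'')\cup\{A'\}$ is a hierarchy'' when processing a later multi-cluster, so the unlabelled-isomorphism claim does not follow automatically from the one-cluster case of Lemma~\ref{t:KnockOne}, and the top-down induction you sketch would need to track carefully how earlier deletions interact with later ones. Until that argument is written out, your proof has a gap; the paper's counting argument, taken at face value, sidesteps this difficulty entirely.
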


\begin{proof}
Calculation of the rank $f(T)$ of $T$ is linear because there are at most $n$ clusters in a tree.

Calculation of the multi-hierarchy via MAKEMULTI (Algorithm~\ref{a:makemulti}) involves a linear number of intersections, and intersections can be done in linear time. Hence calculation of the multi-hierarchy is quadratic.

The only part of MAXTREE (Algorithm~\ref{a:maxtree}) that allows for choice is determining which elements to remove when we have repeated clusters. There are at most $4n^2$ multi-clusters in a multi-hierarchy obtained from two trees, and each cluster has a maximum of $n$ elements that we can choose to remove. Hence there is a maximum of $4n^3$ possible choices for a given multi-hierarchy, so iterating over all possible choices and checking $f(T)$ for each one will be polynomial in time complexity.
\end{proof}

\begin{example}
Unfortunately, $e_\HP$ is not equal to $d_\HP$ in general, as the following example demonstrates. Let $T$ and $T'$ be trees on $X=\{a,b,c,d,e,f,g\}$ with $P(T) = \{abc,de\}$ and $P(T') = \{ae,bdf \}$. Then the star tree is the unique tree with a hierarchy preserving map into both $T$ and $T'$, so the algorithm gives a distance of $e_\HP(T,T') = d_\HP(T,S) + d_\HP(T',S) = 3+3=6$. However, it is not difficult to find a path of length $4$ from $T$ to $T'$ in $\HH(X)$. For example, let $U_1,U_2,U_3$ be trees with $P(U_1) = \{ab,de\}, P(U_2) = \{abde\}$ and $P(U_3) = \{ae,bd\}$. Then the path $T,U_1,U_2,U_3,T'$ is one such path.
\end{example}

\begin{observation}\label{o:e.not.metric}
The above example also shows that $e_\HP$ is not a metric, because it fails the triangle inequality: we have $e_\HP(T,U_2)=e_\HP(U_2,T')=2$, but $e_\HP(T,T')=6$.
\end{observation}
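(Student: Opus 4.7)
The plan is to leverage the example immediately preceding, where three trees $T$, $U_2$, $T'$ on $X=\{a,b,c,d,e,f,g\}$ are described with $P(T)=\{abc,de\}$, $P(U_2)=\{abde\}$, and $P(T')=\{ae,bdf\}$, and it has already been established that $e_\HP(T,T')=6$. To conclude that the triangle inequality fails, it therefore suffices to verify $e_\HP(T,U_2)=2$ and $e_\HP(U_2,T')=2$, since then $e_\HP(T,U_2)+e_\HP(U_2,T')=4<6=e_\HP(T,T')$.

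I would compute each value directly from the definition of $e_\HP$ by identifying $\max_{\le_\HP}(T,U_2)$ and $\max_{\le_\HP}(U_2,T')$. For this, I would use the observation that if $T_i\le_\HP S_1$ and $T_i\le_\HP S_2$, then by the enveloping property each proper cluster $C$ of $T_i$ must be contained in some proper cluster of $S_1$ and in some proper cluster of $S_2$, hence $C$ is a subset of the intersection of some proper cluster of $S_1$ with some proper cluster of $S_2$.

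For the pair $(T,U_2)$, the candidate non-singleton clusters are subsets of $abc\cap abde=\{a,b\}$ or of $de\cap abde=\{d,e\}$, yielding the unique maximal common lower bound $T_i$ with $P(T_i)=\{ab,de\}$ and $f(T_i)=2$; together with $f(T)=3$ and $f(U_2)=3$ this gives $e_\HP(T,U_2)=3+3-2\cdot 2=2$. For $(U_2,T')$, the candidate non-singleton clusters are subsets of $ae\cap abde=\{a,e\}$ or of $bdf\cap abde=\{b,d\}$, giving the maximal common lower bound with proper clusters $\{ae,bd\}$ of rank $2$, hence $e_\HP(U_2,T')=3+3-4=2$. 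Combining these computations yields the claim.

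The only subtlety, and the one I would be most careful about, is justifying that no larger common lower bound exists in each case; this reduces to checking that a proper cluster forced to lie inside both an $S_1$-cluster and an $S_2$-cluster really must sit inside their set-theoretic intersection. This follows because a hierarchy-preserving map sends a cluster $C$ to a single cluster containing $C$, so if two candidate clusters of $S_1$ and $S_2$ both envelope $C$, their intersection must also contain $C$. Once this is observed the rest is a routine enumeration, and the inequalities $e_\HP(T,U_2)=e_\HP(U_2,T')=2<6/2$ deliver the failure of the triangle inequality.
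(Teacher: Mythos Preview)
Your proposal is correct and essentially matches the paper's (implicit) reasoning: the paper states this as a bare observation with no accompanying proof, so you are simply supplying the verification the authors leave to the reader. In fact the maximal common lower bounds you identify, with proper cluster sets $\{ab,de\}$ and $\{ae,bd\}$, are precisely the trees $U_1$ and $U_3$ already appearing in the length-$4$ path of the preceding example, which is why the authors regard the claim as evident.

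One small point worth tightening: when you argue that each proper cluster $C$ of $T_i$ must land inside a \emph{proper} cluster of $S_1$ (and of $S_2$), enveloping alone is not enough, since a priori $\delta(C)$ could equal $X$. You need subset-preservation as well: from $C\subset X$ one gets $\delta(C)\subset\delta(X)=X$, so $\delta(C)$ is indeed a proper cluster. Once that is said, your intersection argument goes through and the enumeration of possible lower bounds is complete.
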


\section{Computational results}\label{s:comp.results}

We have implemented the algorithms required to compute $e_\HP$, and in this section present some preliminary results.  Because MCMC algorithms often examine only binary trees, we explore both all of $RP(X)$ and also $BRP(X)$, the set of binary trees.  

A na\"ive algorithm to calculate the true distance $d_\HP$ (by checking all trees along all possible paths shorter than $e_\HP$, with some optimisations) can be used for trees on up to nine leaves, although the same approach for ten or more leaves can be very slow (for some pairs of trees over thirty minutes).
The algorithm, implemented in Python, can be found at \cite{Hendriksen2019}.

\subsection{Comparison of the upper bound $e_\HP$ with the true distance $d_\HP$. }

Figure \ref{9leaftest} shows the results of an experiment on $100$ random pairs of trees with $9$ leaves. The data indicate that the upper bound is reasonably accurate, with $e_\HP$ and $d_\HP$ being equal in $77\%$ of cases. The mean upper bound distance in this simulation was $9.87$, while the mean true distance was $9.39$. The biggest difference between the upper bound and the true distance was $4$.

\begin{figure}[ht]
\centering
\includegraphics{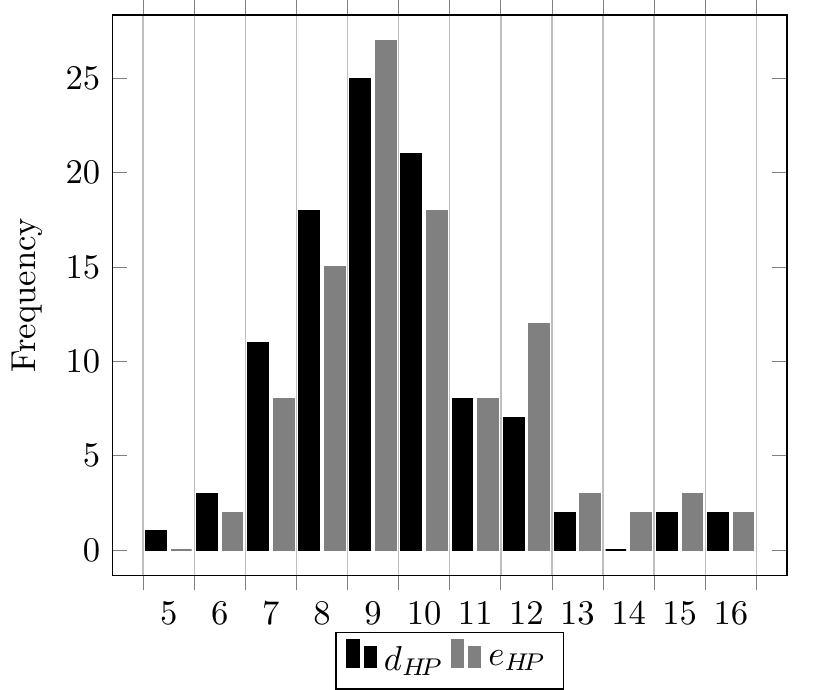}
\caption{A comparison of $e_\HP$ with $d_\HP$, on trees with $n=9$ leaves.}
\label{9leaftest}
\end{figure}

On the same data set, we also investigated how the proportion of $e_\HP$ values of a given distance were related to the value of $e_\HP$, with results given in Figure \ref{9leafperc}. Overall it appears that the larger the $e_\HP$, the less accurate the distances are, with the abrupt increase at distances $15$ and $16$ likely due to small sample sizes at this distance. We were unable to confirm this due to the exponential time that it takes our current algorithm to find $d_\HP$.

\begin{figure}[ht]
\centering
\includegraphics{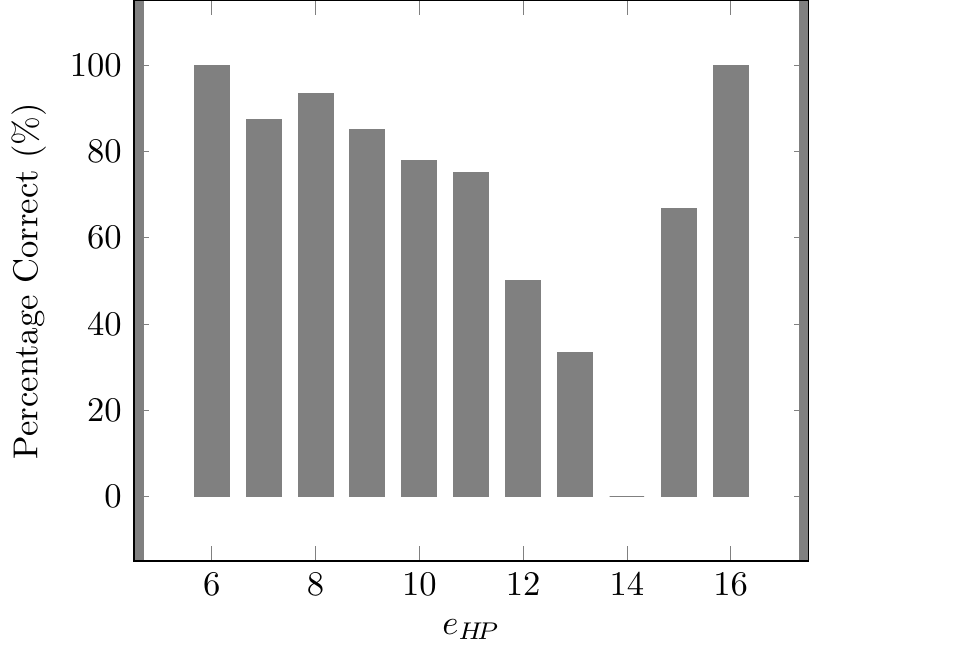}
\caption{A comparison of $e_\HP$ with the proportion of values of $e_\HP$  for which $e_\HP = d_\HP$, on trees with $n=9$ leaves.}
\label{9leafperc}
\end{figure}

\subsection{Experimental results on the upper bound $e_\HP$.}

Table~\ref{f:SimStats} shows some representative distance statistics for the upper bound $e_\HP$ on the distance.

\begin{table}
\begin{center}
\begin{tabular}{c| c c|| c c|} 
\cline{2-5}
& \multicolumn{2}{|c||}{RP(X)} & \multicolumn{2}{|c|}{BRP(X)} \\
\hline
\multicolumn{1}{ |c | }{n} & Average $e_\HP$ & Maximum $e_\HP$ & Average $e_\HP$ & Maximum $e_\HP$ \\ [0.5ex] 
\hline\hline
\multicolumn{1}{ |c | }{4} & 2.587 & 4 & 3.0 & 4 \\ 
\hline
\multicolumn{1}{ |c | }{5} & 4.645 & 8 & 5.525 & 8 \\ 
\hline
\multicolumn{1}{ |c | }{6} & 5.294 & 12 & 8.440 & 12 \\ 
\hline
\multicolumn{1}{ |c | }{7} & 6.990 & 16 & 10.123 & 16\\ 
\hline
\multicolumn{1}{ |c | }{8} & 8.752 & 17 & 12.900 & 19\\ 
\hline
\multicolumn{1}{ |c | }{9} & 10.708 & 21 & 15.883 & 24\\ 
\hline
\multicolumn{1}{ |c | }{10} & 12.695 & 24 & 18.983 & 29\\ 
\hline
\multicolumn{1}{ |c | }{20} & 35.719 & 57 & 56.344 & 74\\ 
\hline
\multicolumn{1}{ |c | }{40} & 91.662 & 123 & 151.527 & 176 \\ 
\hline
\end{tabular}
\end{center}
\caption{
Distance statistics for pairs of trees with each number of leaves. 
For $|X|\le 6$ (resp. $|X|\le 5$) these statistics represent calculations over \emph{all} pairs of trees in $RP(X)$ (resp. $BRP(X)$).  For larger leaf sets the results are the outcome of testing a sample of $20,000$ random pairs of trees.}
\label{f:SimStats}
\end{table}

The Average Distance column indicates the average $e_\HP$ between pairs, to three decimal places. These are provided as a baseline from which to judge the distance for a given pair of trees. 

The Maximum Distance column shows the maximum recorded $e_\HP$ between a pair of trees. Note that all trees that are the result of simulations only provide a lower bound on the maximum $e_\HP$, which is again an upper bound on the true $e_\HP$.

In particular, note that in Table~\ref{f:SimStats}, both the average and maximum $e_\HP$ on $BRP(X)$ are larger than those on all of $RP(X)$. Indeed, for $n=40$ on binary trees the average distance is larger than the maximum distance obtained for $n=40$ on all trees!  For such large trees the distributions of distances seem to radically diverge, as seen in Figure~\ref{f:DistHist}, which shows distances for 20,000 randomly selected pairs of trees.

\begin{figure}[ht]
\centering
\includegraphics[width = 0.95\textwidth]{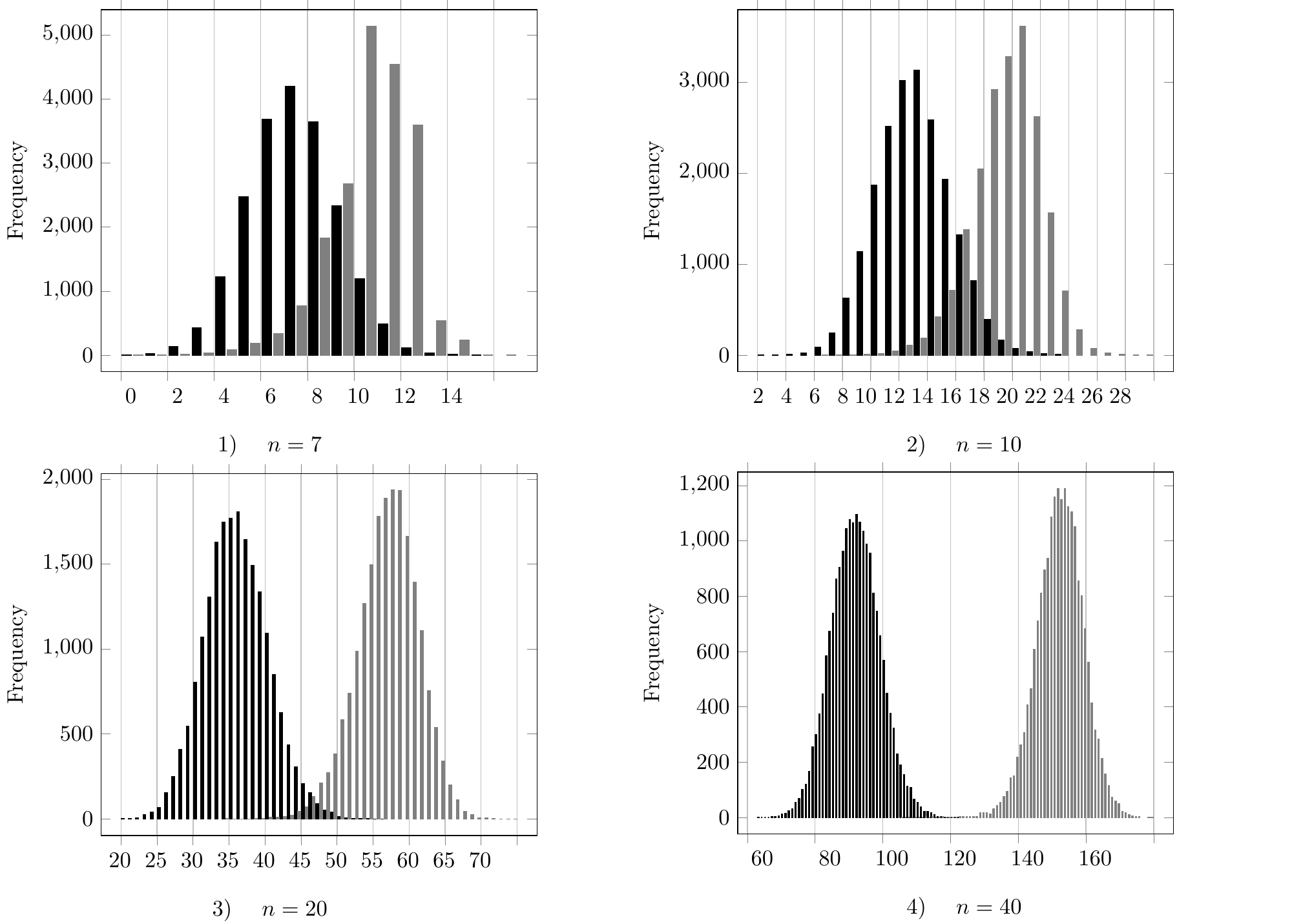}
\caption{Histograms of $e_\HP$ under 20,000 simulations of random pairs of trees with $n$ leaves. Simulations using trees randomly selected from all $\RP(X)$ in black, and $\BRP(X)$ in grey.}
\label{f:DistHist}
\end{figure}

Of course, the distributions don't \textit{actually} diverge, because after all the binary trees $\BRP(X)$ are a subset of the set of all trees $RP(X)$.  However the binary trees sit along the top of the very large Hasse diagram, since they are all of maximal rank (Prop~\ref{p:max.elts.RPX}), so the range of potential distances between them is therefore higher than any pair of nonbinary trees (Corollary \ref{DistanceEstimate}).  It is therefore, heuristically at least, unsurprising that the distances are correspondingly higher.

Part of the explanation for the apparent divergence of the distributions seen in Figure~\ref{f:DistHist} in the 40 leaf case is that the binary trees are such a small proportion of the total number of trees that when selecting a pair of random trees one almost never selects a pair of binary trees.  

\DeclareRobustCommand{\stirling}{\genfrac\{\}{0pt}{}}

In the sampling, trees are selected by randomly partitioning the set of leaves, and successively partitioning the components of the partition until all components have cardinality 1 (the leaves).  To select a binary tree, each successive partition must be a partition into exactly two components.  The probability of doing this is the number of partitions of 40 into two parts divided by the total number of partitions into any number of parts $k$.  These are counted by the Stirling numbers of the second kind, $\stirling{n}{k}$.  So the probability of even the first partition (immediately below the root) being binary is just
\[\frac{\stirling{40}{2}}{\sum_{k=2}^{40}\stirling{40}{k}},
\]
which is approximately $3.49\times 10^{-24}$.   To select a fully binary tree one would need to continue to choose further partitions into two parts at each point.  

It is perhaps worth noting the symmetry of the distributions shown in Figure~\ref{f:DistHist}, which suggest that the metric $e_\HP$ avoids the skew that affects the Robinson-Foulds metric.

\section{Discussion}

The new metric on phylogenetic tree space introduced in this paper has several interesting properties that may make it valuable for biological applications. 

First of all, it is a cluster-similarity metric, so the notion of distance between two trees corresponds to the similarity of their hierarchies.  This in itself is a valuable property in terms of comparisons of trees that have arisen under related processes, such as gene trees in the presence of incomplete lineage sorting.

Second, in contrast to other cluster-similarity metrics, this metric has a simple local operation to move around tree space, ensuring easy calculation of neighbourhoods.  This feature, coupled with the cluster-similarity property, can be expected to help with MCMC searches of tree-space around trees of similar hierarchies. 

And third, the distribution of distances on a given tree space appears to be quite symmetric, and also to have a reasonable spread of values.  This will be valuable in choosing trees from a set that are closest to each other or to a special tree (such as a purported species tree), in a way consistent with their hierarchies, and also makes it capable of distinguishing trees in a way required in the biological studies mentioned in the Introduction.

A primary goal for future study would be to either find an efficient method for calculating $d_\HP$ exactly, or a proof that any algorithm to calculate $d_\HP$ must have expoenetial runtime. If the complexity of this calculation is found to be high, results regarding the accuracy of the upper bound $e_\HP$ would prove useful. It would also be interesting to find tighter bounds for many of the results in this paper. For instance, under $d_\HP$, the diameter of $RP(X)$ and the neighbourhood size of a given tree $T$ can almost certainly be given better bounds.

It may be that the ranks of trees are able to provide additional information for estimating tree distances. For instance, Corollary \ref{DistanceEstimate} allows one to estimate distances between trees quite well if one or both trees have small rank. Further, it is not difficult to show that for any pair of binary trees $T,T'$, the distance $d_\HP(T,T') < f(T)+f(T')$ - note the strict inequality. Hence further research into the relationship between the ranks of trees and the distances between them may be fruitful.

Finally, the notion of hierarchy-preserving maps may be of independent mathematical interest. It is one of many possible generalisations of refinement, and as such is compatible with the notion. To our knowledge, the induced partial order and the concept of binding are both also new and may provoke further interest in the mathematical community.

\section*{Acknowledgements}
The authors thank the reviewers of the manuscript for many valuable suggestions, and Alexei Drummond and Bill Martin for helpful discussions while this manuscript was in preparation, at the New Zealand Phylogenetics meeting in February 2019.  The first author thanks Western Sydney University for its support in the form of an Australian Postgraduate Award during this research.

\bibliographystyle{plainnat}

\providecommand{\bysame}{\leavevmode\hbox to3em{\hrulefill}\thinspace}
\providecommand{\MR}{\relax\ifhmode\unskip\space\fi MR }

\end{document}